\newtheorem{lemma}{Lemma}
\newtheorem{theo}{Theorem}
\newtheorem{fact}{Fact}
\newtheorem{claim}{Claim}
\newcommand{\E}[1]{\mathbf{E}\left[#1\right]}
\newcommand{\Ep}[1]{\mathbf{E}#1}
\newcommand{\V}[1]{\textrm{Var}[#1]}
\newcommand{\ro}[1]{\textrm{root}(#1)}
\newcommand{\Mf}[2]{M(#1,#2)}
\newcommand{\EMf}[2]{\mathbf{E}M(#1,#2)}
\newcommand{\remove}[1]{}
\newcommand{\prelim}[0]{2}
\newcommand{\pendantFirst}[0]{1}
\newcommand{\tightSampling}[0]{10}
\newcommand{\ESrecCDT}[0]{2}
\newcommand{\mainLemmaSparse}[0]{3}
\begin{document}


\title{The Query-commit Problem}
\author{Marco Molinaro\thanks{Tepper School of Business 
Carnegie Mellon University, Pittsburgh, PA
15213, USA, \mbox{molinaro@cmu.edu}} \and R. Ravi \thanks{Tepper School of Business 
Carnegie Mellon University, Pittsburgh, PA
15213, USA, \mbox{ravi@cmu.edu}}}
\date{}		

\maketitle

\begin{abstract}
	In the \emph{query-commit problem} we are given a graph where edges have distinct probabilities of existing. It is possible to query the edges of the graph, and if the queried edge exists then its endpoints are irrevocably matched. The goal is to find a querying strategy which maximizes the expected size of the matching obtained. This stochastic matching setup is motivated by applications in kidney exchanges and online dating.
		
		In this paper we address the query-commit problem from both theoretical and experimental perspectives. First, we show that a simple class of edges can be queried without compromising the optimality of the strategy. This property is then used to obtain in polynomial time an optimal querying strategy when the input graph is sparse. Next we turn our attentions to the kidney exchange application, focusing on instances modeled over real data from existing exchange programs. We prove that, as the number of nodes grows, almost every instance admits a strategy which matches almost all nodes. This result supports the intuition that more exchanges are possible on a larger pool of patient/donors and gives theoretical justification for unifying the existing exchange programs. Finally, we evaluate experimentally different querying strategies over kidney exchange instances. We show that even very simple heuristics perform fairly well, being within 1.5\% of an optimal \emph{clairvoyant} strategy, that knows in advance the edges in the graph. In such a time-sensitive application, this result motivates the use of \emph{committing} strategies.
\end{abstract}


	\section{Introduction}
	
			The theory of matchings is among one of the most developed parts of graph theory and combinatorics \cite{lovasz}. Matchings can used in a variety of situations, ranging from allocation of workers to workplaces to exchange of kidney among living donors \cite{roth1}. However, the uncertainty present in most applications is not captured by standard models. In order to address this limitation we consider a stochastic variant of matchings.
			
	Before presenting the query-commit problem we describe an application in kidney exchanges which motivates our model. Unfortunately current patients who require a kidney transplant far outnumber the available organs. In the United States alone, more than 84,000 patients were waiting for a kidney in 2010 and 4,268 people in such situation died in 2008 \cite{unos}. However, a distinguished characteristic pertaining to kidney transplants is that they can be carried using the organ of a living donor, usually a relative of the patient. Such operations have great potential to alleviate the long waiting times for transplants. 
	
	One major issue is that many operations cannot be executed due to incompatibility between a patient and its donor. In order to overcome this, it is important to consider 2-way exchanges: Suppose patient $A$ and its willing donor $A'$ are not compatible and the same holds for $B$ and $B'$; however, it is still possible that both patients can receive the required organ via transplants from $A'$ to $B$ and from $B'$ to $A$. This situation can be modeled using a \emph{compatibility graph}, where each node represents a patient/donor pair and edges represent the cross-compatibility between such pairs. The set of transplants that maximize the number of organs received is then given by a maximum matching in this graph \cite{roth1, roth2}. 
	
	However, such a model does not take into account uncertainty in the compatibility graph. In practice, preliminary tests such as blood-type and antigen screening are used to determine only the likelihood of cross-compatibility between pairs. Final compatibility can only be determined using a time-consuming test called \emph{crossmatching}, which involves combining samples of the recipients' and donors' blood to check their reactivity. Furthermore, such a test must be performed close to the surgery date, since even the administration of certain drugs may affect compatibility \cite{crossmatching}. That is, the transplant should be executed as soon as it is detected that two patient/donor pairs are determined to be cross-compatible.
	
	The kidney exchange application motivates the \emph{query-commit problem}, which can be described briefly as follows. We are given a weighted graph $G$ where the weight $p_e$ indicates the probability of existence of edge $e$. In each time step we can query an edge $e$ of $G$ and one of the following happen: with probability $p_e$ (corresponding to the event that $e$ actually exists) its endpoints are irrevocably matched and removed from the graph; with probability $1- p_e$ (corresponding to the event that $e$ does not exist) $e$ is removed from the graph. Notice that at the end of this procedure we obtain a matching in $G$, dependent on both the choices of the queries and the randomness of the edges' existence. In the query-commit problem our goal is to obtain a query strategy that maximizes the expected cardinality of the matching obtained. 
		
		\paragraph{Our results.} 	In this paper we address the query-commit problem from both theoretical and experimental perspectives. First, we show that a simple class of edges can be queried without compromising the optimality of the strategy. This result can be used to simplify the decision making process by reducing the search space. In order to illustrate this, we show that employing this property we can obtain in polynomial time an optimal querying strategy when the input graph is sparse. 
		
		Then we turn our attentions to the kidney exchange application, more specifically on instances for the query-commit problem modeled over real data from existing kidney exchange programs. In this context we are able to prove the following result: as the number of nodes grows, almost every such graph admits a strategy which matches almost all nodes. This result support the intuition that more exchanges are possible on a larger pool of patient/donors \cite{david, segev}. More importantly, it shows the potential gains of merging current kidney exchange programs into a nationwide bank.
		
		Finally, we propose and evaluate experimentally different querying strategies, again focusing on the kidney exchange application. We show that even very simple heuristics perform fairly well. Surprisingly, the best among these strategies are on average within 1.5\% of an optimal \emph{clairvoyant} or \emph{non-committing} strategy that knows in advance the edges in the graph. This indicates that the \emph{committing} constraint is not too stringent in this application. Thus, in such time-sensitive application, this result motivates the use of \emph{committing} strategies.

		\paragraph{Related work.} \cite{chen} recently introduced a generalization of the query-commit problem which contains the extra constraint that a strategy cannot query too many edges incident on the same node. In addition to kidney exchange, the authors also point out the usefulness of this model in the context of online dating. Their main result is that a simple greedy querying strategy is a within a factor of $1/4$ from an optimal strategy. \cite{mestre} use a combination of two strategies in order to obtain an improved approximation factor of $1/3.88$. \cite{viswanath} consider a further extension where edges have values and the goal is to find a strategy that maximize the expected value of the matching obtained. Using an LP-based approach they are able to obtain a strategy which is within a constant factor from an optimal strategy.
		
		We note, however, that in the case of the query-commit problem (i.e. when there is no constraint on the number of edges incident to a node that a strategy can query) every strategy which does not stop before querying all permissible edges is a $1/2$-approximation \cite{chen}. This follows from two easy facts: (i) for every outcome of the randomness from the edges, such a strategy obtains a maximal matching and (ii) every maximal matching is within a factor of $1/2$ from a maximum matching. 
		
		The query-commit problem is similar in nature to other stochastic optimization problems with irrevocable decisions, such as stochastic knapsack \cite{goemansKnapsack} and stochastic packing integer programs \cite{goemansIP}. In both \cite{goemansKnapsack, goemansIP} the authors present approximation algorithms as well as bounds on the benefit of using an adaptive strategy versus a non-adaptive one. 
		
		Different forms of incorporating uncertainty have been also studied \cite{SP} and in particular stochastic versions of classical combinatorial problems have been considered in the literature \cite{covering,paths}. Matching also has its variants which handle uncertainty via a 2-stage model \cite{katriel} or in an online fashion \cite{onlinePD, onlineMatching, goel, KVV, unreliable, vazirani}. The latter line of research has been largely motivated by the increasing importance of Internet advertisement.  
		
		The kidney exchange problem, in its deterministic form, has received a great deal of attention in the past few years \cite{david, roth3, roth1, roth2, segev}. In the previous section we argued that 2-way exchanges can increase the number of organs transplanted, but of course larger chains of exchanges can offer even bigger improvements. Unfortunately, considering larger exchanges makes the problem of finding optimal transplant assignments much harder computationally even if all the edges are know in advance \cite{david}. Nonetheless, \cite{david} present integer programming based algorithms which are able to solve large instances of the problem, on scenarios with up to 10,000 patient/donor pairs. The authors point out, however, the importance of considering other models which take into account the uncertainty in the compatibility graph. Finally, \cite{sandholm,unverDynamic,zenios} address the dynamic aspect of exchange banks, where the pool of patients and donors evolve over time. 
		
	The remainder of the paper is organized as follows. In Section \ref{prelim}	we present a more formal definition of the query-commit problem as well as multiple ways of seeing the process in which matchings are obtained from strategies. Then we prove the important structural property (Lemma \ref{pendantFirst}) which is used to obtain in polynomial time optimal strategies for sparse graphs (Section \ref{theoretical}). Moving to the kidney exchange application, we describe in Section \ref{theoKidney} a model for generating realistic compatibility graphs and prove that, as the number of nodes grows, most of these instances admit strategies which match almost all nodes. In Section \ref{experimental} we address the issue of estimating the value of a strategy as well as computing upper bounds on the optimal solution, and conclude by presenting experimental evaluation of querying strategies. As a final remark, the proofs of all lemmas which are not presented in the text are available in the appendix. 
	
	\section{Preliminaries} \label{prelim}

	We use $V(G)$ and $E(G)$ to denote respectively the set of nodes and edges of a given undirected graph $G$, and use $v(G)$ and $e(G)$ to denote their cardinalities. In addition we use $\mu(G)$ to denote the cardinality of the maximum matching in $G$. We define the neighborhood of a node $u$ as the set $N(u) \doteq \{v \in G : (u,v) \in G\}$ and the (edge) neighborhood of an edge $(u,v)$ as the set $N((u,v)) \doteq \{(x,y) \in G: \{u,v\} \cap \{x, y\} \neq \emptyset\}$. Notice an edge is included in its own neighborhood. We ignore isolated nodes in all subsequent graphs. 
	
	  When $G$ is a rooted tree, $\ro{G}$ denotes its root and $G_u$ is the subtree of $G$ which contains $u$ and all of its descendants. When $G$ is a binary tree, we use $l(u)$ and $r(u)$ to denote respectively the left and right children of a node $u \in G$; we also call $G_{l(u)}$ and $G_{r(u)}$ respectively the \emph{left subtree} and \emph{right subtree} of node $u$. The \emph{height} of a rooted tree is the length (in number of nodes) of the longest path between the root and a leaf.
	
	 Throughout the paper we will be interested in weighted graphs $G = (V, E, p)$ where $p : E \rightarrow (0,1]$ associates nonzero weights to the edges of $G$; we refer to them simply as weighted graphs. A \emph{scenario} or \emph{realization} $\sigma$ of $G$, denoted by $\sigma \sim G$, is a subgraph of $G$ obtained by including each edge $e$ independently with probability $p_e$. Note there can be up to $2^{|E|}$ possible realizations in $G$.

		Now we describe, somewhat informally, the dynamics of querying strategies. Consider a weighted graph $G$, a scenario $\sigma \sim G$ and a querying strategy $S$. We start with an empty matching and $S$ makes its first query for an edge $e$ of $G$. If $e \in \sigma$ then $e$ is added to the current matching and we obtain the residual graph (i.e. the set of permissible edges) $R = G \setminus N(e)$. If $e \notin \sigma$ then $e$ is not added to the matching and the residual graph is $R = G \setminus e$. At this point $S$ queries any other edge of $G$; usually we focus on the case that the new edge belongs to $R$, since edges outside $R$ cannot be added to the matching. The process then continues in the same fashion. We remark that $S$ is oblivious to the scenario $\sigma$ and only uses information from previous queries in order to decide its next query.
				
	  In order to make this process more precise we use decision trees to represent querying strategies. In our context, a \emph{decision tree} $T$ is a binary tree with the following properties: (i) each internal node $x \in T$ corresponds to a query for an edge $a(x) \in G$; (ii) all nodes in the right subtree of $x$ correspond to queries for edges in $G \setminus N(a(x))$; (iii) all nodes in the left subtree of $x$ correspond to queries for edges in $G \setminus a(x)$.
	  It is also useful to associate to each node $x \in T$ the residual graph $G^x$ in the following recursive way: $G^{\ro{T}} = G$, $G^{r(x)} = G^x \setminus N(a(x))$ and $G^{l(x)} = G^x \setminus a(x)$. Figure \ref{fig:exDT} presents an example of a decision tree.
	  
	A decision tree $T$ can be interpreted as a querying strategy as follows: First query the edge $a(\ro{T}) \in G$ associated to the root of $T$; if this query is successful then add $a(\ro{T})$ to the current matching and proceed querying using the right subtree of $\ro{T}$, otherwise just proceed querying using the left subtree of $\ro{T}$. Hence, the execution of $T$ over a scenario $\sigma$ induces a path of nodes $x_1, x_2, \ldots, x_k$ in $T$, where $a(x_1), a(x_2), \ldots, a(x_k)$ is the sequence of edges queried by $T$ and $G^{x_1}, G^{x_2}, \ldots, G^{x_k}$ the sequence of residual graphs. 
	
	Notice that, since $T$ only queries permissible edges, the matching obtained is exactly the set of queried edges which belong to $\sigma$; this matching is denoted by $\Mf{T}{G}(\sigma)$. After unpacking previous definitions, we can also write the random matching $\Mf{T}{G}$ recursively as follows (where $\rho = \ro{T}$ and $e = a(\ro{T})$ to simplify the notation): $\Mf{T}{G} = e \cup \Mf{T_{r(\rho)}}{G \setminus N(e)}$ with probability $p_{e}$ and $\Mf{T}{G} = \Mf{T_{l(\rho)}}{G \setminus e}$ with probability $1 - p_{e}$. Thus, the expected size of $\Mf{T}{G}$ is given by
	\begin{eqnarray} 
		\EMf{T}{G} = p_{e} (1 + \EMf{T_{r(\rho)}}{G \setminus N(e)} + (1 - p_{e}) \EMf{T_{l(\rho)}}{G \setminus e}, \label{ESrec}
	\end{eqnarray}
	where we use $\EMf{T}{G}$ instead of $\E{|M(T,G)|}$ to simplify the notation. 
			
	We remark that every strategy can be represented by a decision tree, so we use $T^S$ to denote a decision tree corresponding to a strategy $S$ and use both terms interchangeably.
	
	Making use of the above definitions, we can formally state the query-commit problem: given a weighted graph $G = (V, E, p)$ with $p : E \rightarrow (0,1]$, we want to find a decision tree $T$ for $G$ that maximizes $\EMf{T}{G}$. The value of an optimal solution is denoted by $OPT(G)$.
		
	We are interested in finding computationally efficient strategies for the query-commit problem. Since decision trees may be already exponentially larger then the input graph, our measure of complexity must allow implicitly defined strategies. We say that a strategy is \emph{polynomial-time computable} if the time used to decide the query in each step is bounded by a polynomial on the description of the input graph; this includes any preprocessing time (i.e. time to construct a decision tree). In Section \ref{optSparse} we present structures similar to decision trees that are useful to describe time-efficient strategies.
		
\begin{figure}
	\centering
		\includegraphics[scale=0.60]{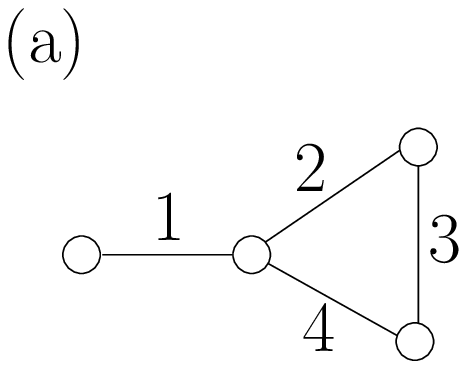} \hspace{90pt}
		\includegraphics[scale=0.60]{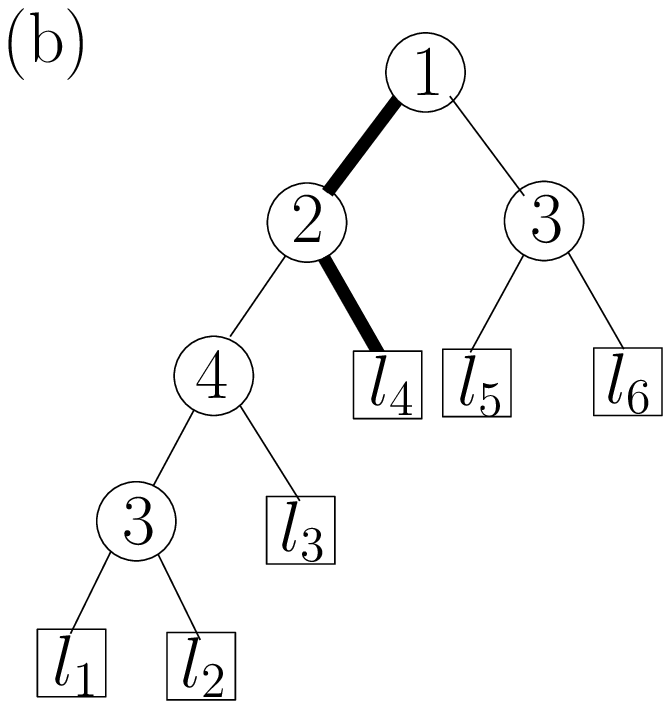}
	\caption{(a) Input graph $G$ with edges labeled from 1 to 4. (b) Example of decision tree $T$ for $G$, with labels corresponding to $a(x)$ for each internal node $x$ of $T$. Let $y$ denote the father of $l_3$, so that $a(y) = 4$; $G^y$ is the subgraph of $G$ consisting of edges $3$ and $4$. The bold path in $T$ is the path obtained by executing $T$ over the scenario $\sigma$ which consists of edges $2$, $3$ and $4$. Moreover, $\Mf{T}{G}(\sigma) = \{2\}$.}
	\label{fig:exDT}
\end{figure}

	
	\section{General theoretical results} \label{theoretical}
	
	We say that an edge of a graph is \emph{pendant} if at least one of its endpoints has degree 1. As a start for our theoretical results, we show that pendant edges can be queried first without compromising the optimality of the strategy. This observation will be fundamental for the development of the polynomial-time computable algorithm for sparse graphs and is also used in the heuristics tested in the experimental section. 
		
	\begin{lemma} \label{pendantFirst}
		Consider a weighted graph $G$ and let $e$ be a pendant edge in $G$. Then there is an optimal strategy whose first query is $e$. 
	\end{lemma}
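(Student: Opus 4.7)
The plan is to prove this by induction on $|E(G)|$, performing an exchange at the root of an optimal decision tree $T^*$. If $a(\ro{T^*}) = e$ we are done; otherwise set $\rho = \ro{T^*}$ and $f = a(\rho) \neq e$. By the recurrence \eqref{ESrec} together with the optimality of $T^*$, the left subtree $T^*_{l(\rho)}$ is optimal on $G \setminus f$, in which $e$ is still a pendant edge. The inductive hypothesis then lets me assume $T^*_{l(\rho)}$ already queries $e$ at its root; call its right and left subtrees $T_{LR}$ and $T_{LL}$.

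If $f \notin N(e)$, then $e$ is also pendant in $G \setminus N(f)$, so the same argument lets me assume $T^*_{r(\rho)}$ queries $e$ first. Now $e$ and $f$ sit as disjoint queries at the top two levels, with four ``grand-subtrees'' corresponding to the four joint outcomes of $(e,f)$. I would simply swap $e$ and $f$, keeping each grand-subtree on the outcome branch it represents; a direct expansion via \eqref{ESrec} shows this swap preserves the expected matching size exactly.

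The main obstacle is the case $f \in N(e)$. Since $\deg(v) = 1$ and $f \neq e$, the edge $f$ cannot touch $v$, so $f$ must share the endpoint $u$; write $f = (u,w)$. Now $e \notin G \setminus N(f)$, so induction cannot place $e$ at the root of $T^*_{r(\rho)}$. Instead I construct $T'$ explicitly with $e$ at the root: on $e$-success use $T_{LR}$; on $e$-failure query $f$ next, and then use $T^*_{r(\rho)}$ on $f$-success and $T_{LL}$ on $f$-failure. This is well-defined because $e \in N(f)$ gives $(G \setminus e) \setminus N(f) = G \setminus N(f)$, matching the domain of $T^*_{r(\rho)}$. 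A straightforward expansion via \eqref{ESrec} yields
\begin{equation*}
\EMf{T'}{G} - \EMf{T^*}{G} = p_e p_f \bigl( \EMf{T_{LR}}{G \setminus N(e)} - \EMf{T^*_{r(\rho)}}{G \setminus N(f)} \bigr).
\end{equation*}
By DP-optimality of the subtrees, these two expectations equal $OPT(G \setminus N(e))$ and $OPT(G \setminus N(f))$, respectively; since $N(e) \subseteq N(f)$ (as $v$ has no other incident edge) implies $G \setminus N(f) \subseteq G \setminus N(e)$, and $OPT$ is monotone under edge addition (extra edges can be ignored), the difference is nonnegative and $T'$ is at least as good as $T^*$. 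The delicate point is exactly this monotonicity-based comparison: a pure swap cannot work here because $w$'s role shifts between the two strategies, but the slack from optimizing over a strictly larger graph absorbs the algebraic cost of delaying $f$.
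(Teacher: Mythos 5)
Your proof is correct, but it takes a genuinely different route from the paper's. The paper argues by a direct simulation/coupling: take an optimal $S^*$, and let $S$ query $e$ first and then run $S^*$ verbatim (still probing edges that $S^*$ would probe, even when they can no longer be added to the matching). Because $e$ is pendant, $\Mf{S^*}{G}(\sigma)$ contains at most one edge $e'$ incident to $e$, so for every scenario $\sigma$ one gets $|\Mf{S}{G}(\sigma)| \ge |\Mf{S^*}{G}(\sigma)|$ pointwise — in the worst case $S$ trades $e'$ for $e$. This is shorter, needs no induction, and yields scenario-wise domination (hence stochastic dominance), not just domination in expectation. Your argument instead performs an exchange at the root of an optimal decision tree, by induction on $e(G)$: the disjoint case ($f \notin N(e)$) is an exact swap, and the adjacent case ($f = (u,w)$ sharing the endpoint $u$) reduces, after expanding \eqref{ESrec}, to the inequality $OPT(G \setminus N(e)) \ge OPT(G \setminus N(f))$, which follows from $N(e) \subseteq N(f)$ and monotonicity of $OPT$ under passing to a supergraph. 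All of these steps check out (including the implicit facts that subtrees of optimal trees are optimal since $p > 0$ everywhere, and that your constructed $T'$ is a valid decision tree because $(G \setminus e) \setminus N(f) = G \setminus N(f)$ when $e \in N(f)$). What your route buys is a purely local, algebraic exchange argument that isolates exactly where pendancy is used ($N(e) \subseteq N(f)$); what it costs is the induction, the case analysis, and the weaker conclusion (expectation only). Either proof suffices for the lemma as stated.
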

		

	In order to illustrate the relevance of pendant edges we mention the following result.
	
	\begin{lemma}
		Suppose $G$ is a weighted forest and let $S$ be a strategy that always queries a pendant edge in the residual graph. Then for every $\sigma \sim G$, $|\Mf{S}{G}(\sigma)| = \mu(\sigma)$.
	\end{lemma}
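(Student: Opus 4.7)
The plan is to induct on $|E(G)|$, exploiting the classical fact that any pendant edge in a graph lies in some maximum matching of that graph; this is precisely why the pendant-edge greedy algorithm computes maximum matchings in forests. For the base case $|E(G)|=0$, the only scenario is $\sigma=\emptyset$, the strategy makes no queries, and both sides of the claim equal zero. For the inductive step, let $e=(u,v)$ be the first edge queried by $S$, and assume without loss of generality that $v$ has degree one in $G$.

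The structural fact I will use is: for any subgraph $\tau\subseteq G$ containing $e$, the edge $e$ is still pendant in $\tau$ (since $\deg_{\tau}(v)\le\deg_{G}(v)=1$), so $\tau$ admits a maximum matching containing $e$, giving $\mu(\tau)=1+\mu(\tau\setminus N(e))$. (Verification: if a maximum matching $M^{*}$ of $\tau$ omits $e$, then $v$ is unmatched; if $u$ is also unmatched we could add $e$ to $M^{*}$, contradicting maximality; otherwise $u$ is matched to some $w$ and swapping $(u,w)$ for $e$ yields an equally large matching containing $e$.)

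Now split on $\sigma$. If $e\in\sigma$, then $S$ commits $e$ and its right subtree runs as a strategy $S'$ on $G\setminus N(e)$, which is still a forest in which every query is pendant in its residual graph (the residual graphs $G^{x}$ for $x$ in the right subtree are identical whether we view that subtree as a strategy on $G$ or on $G\setminus N(e)$). The inductive hypothesis therefore gives $|\Mf{S'}{G\setminus N(e)}(\sigma)|=\mu(\sigma\cap(G\setminus N(e)))=\mu(\sigma\setminus N(e))$, and the structural fact yields $|\Mf{S}{G}(\sigma)|=1+\mu(\sigma\setminus N(e))=\mu(\sigma)$. If instead $e\notin\sigma$, no edge is added and $S$ continues via its left subtree on $G\setminus\{e\}$; the inductive hypothesis gives $|\Mf{S}{G}(\sigma)|=\mu(\sigma\cap(G\setminus\{e\}))=\mu(\sigma)$, the last equality because $e\notin\sigma$.

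The main obstacle is purely bookkeeping: verifying that each continuation strategy inherits the ``queries only pendant edges in the residual graph'' property when re-viewed as a strategy on the smaller forest (immediate from the recursive definition of $G^{x}$), and correctly identifying $\sigma\cap G^{x}$ with the restrictions of $\sigma$ that appear in the computation, up to isolated vertices which do not affect matching size. No ideas beyond the standard pendant-edge greedy analysis are required.
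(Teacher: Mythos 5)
Your proof is correct, but it takes a genuinely different route from the paper's. The paper argues by contradiction within a single fixed scenario: if $\Mf{S}{G}(\sigma)$ were not maximum in $\sigma$, Berge's Lemma yields an augmenting path $P$, and the pendancy of each queried edge $a(x_i)$ in its residual graph $G^{x_i}$ is used to trace back an earlier matched edge that would have to sit on an endpoint of $P$ --- impossible since endpoints of an augmenting path are unmatched. You instead induct on $e(G)$, using the classical exchange fact that a pendant edge of $\tau$ lies in some maximum matching of $\tau$, hence $\mu(\tau)=1+\mu(\tau\setminus N(e))$, and then match this against the recursion $\Mf{T}{G} = e \cup \Mf{T_{r(\rho)}}{G\setminus N(e)}$ or $\Mf{T_{l(\rho)}}{G\setminus e}$. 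Your approach is more elementary (no Berge's Lemma) and dovetails with the recursive decision-tree formalism of Section 2 and with the exchange argument already used to prove Lemma 1; its cost is the bookkeeping you flag --- checking that the continuation strategies inherit the pendant-query property and that the scenario restricts correctly to the residual graph --- which is routine but must be said. The paper's augmenting-path argument avoids that bookkeeping entirely by working globally in one scenario at a time. One shared implicit assumption worth making explicit in either write-up: the strategy must not stop while the residual graph still has edges (which is consistent, since a nonempty residual forest always has a pendant edge); otherwise the claim is false.
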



	\subsection{Optimal strategies for sparse graphs} \label{optSparse}
	
	We say that a graph $G$ is $d$-sparse if $e(G) \le v(G) + d$. In this section we exhibit a polynomial-time optimal strategy for $d$-sparse graphs when $d$ is constant. We focus on connected graphs but the result can be extended by considering separately the connected components of the graph. 
	
	So let $G$ be a connected $d$-sparse graph and we further assume (for now) that $G$ does not have any pendant edges; the rationale for the latter is that from Lemma \ref{pendantFirst} we can always start querying pendent edges until we reach a residual graph which has none.
	
	\paragraph{Contracted decision trees.} First, we need to introduce the concept of a \emph{contracted decision tree} (CDT), which generalizes the decision trees introduced in Section \ref{prelim}.

	 Given a strategy $S$ for a weighted graph $H$ we use $\mathcal{R}(S, H)$ to denote the set of possible residual graphs after the execution of $S$, that is, $\mathcal{R}(S, H) = \{ H^x : x \textrm{ is a leaf of } T^S\}$. Then a \emph{contracted decision tree} $T$ for $G$ is a rooted tree where every node $x$ is associated to a residual graph $G^x$ and every internal node $x$ is associated to a query strategy $S^x$ for $G^x$ satisfying the following: (i) $G^{\ro{T}} = G$; (ii) every internal node $x$ of $T$ has exactly $q = |\mathcal{R}(S^x, G^x)|$ children $y_1, y_2, \ldots, y_q$ and $\mathcal{R}(S^x, G^x) = \{G^{y_1}, G^{y_2}, \ldots, G^{y_q}\}$; (iii) if $x$ is an ancestor of $y$ in $T$ then $S^x \neq S^y$. Figure \ref{fig:exCDT} presents an example of a contracted decision tree. 
	 
	\begin{figure}
	\centering
		\includegraphics[scale=0.60]{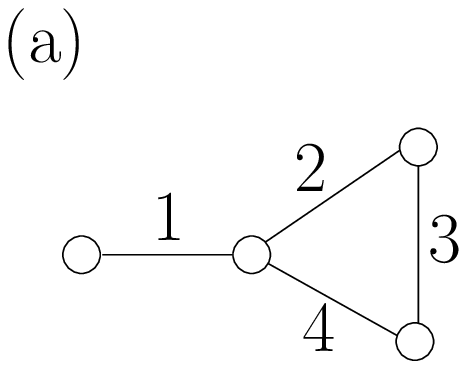} \hspace{45pt}
		\includegraphics[scale=0.60]{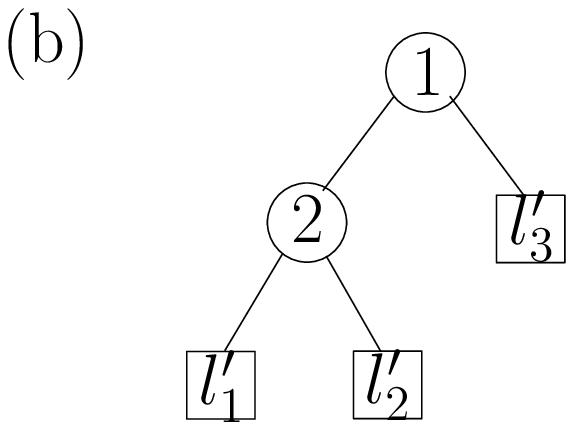} \hspace{45pt}
		\includegraphics[scale=0.60]{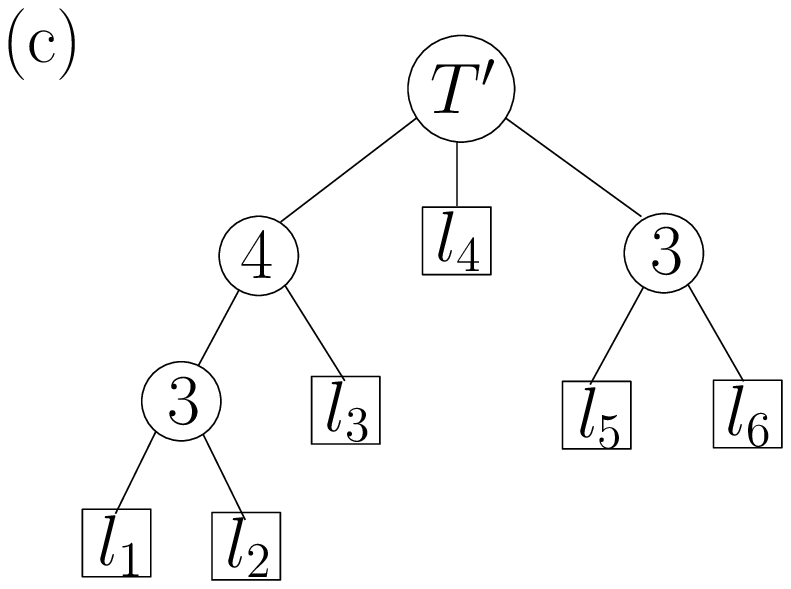}
	\caption{(a) Input graph $G$ with edges labeled from 1 to 4. (b) A decision tree $T'$ for the subgraph of $G$ induced by edges 1 and 2. The set $\mathcal{R}(T', G)$ consists of the graphs $G_1, G_2$ and $G_3$, where $G_1$ is the subgraph of $G$ induced by edges 3 and 4, $G_2$ is the empty graph and $G_3$ is induced by edge 3. (c) A CDT $\bar{T}$ for $G$ where $S^{\ro{\bar{T}}} = T'$. The graphs $G^x$ for the children of $\ro{\bar{T}}$ are, respectively from left to right, $G_1$, $G_2$ and $G_3$. The decision tree corresponding to $\bar{T}$ is $T$ in Figure \ref{fig:exDT}.b. Conversely, $\bar{T}$ can be obtained by contracting $T'$ in $T$.}
	\label{fig:exCDT}
\end{figure}
	 
	 A few remarks are in place. First, condition (iii) is not really fundamental in the definition, although it avoids trivial cases in future proofs. More importantly, notice that a decision tree is simply a CDT where strategy $S^x$ queries a single edge of $G^x$. Also, a CDT can be seen as decision tree where some of its subtrees were contracted into a single node and, conversely, we can obtain a decision tree from a contracted decision by expanding the partial strategies $S^x$'s into decision trees.
	
	A CDT $T$ can be interpreted as a strategy in a similar way as in decision trees: start with an empty matching at the root $\rho = \ro{T}$ and query according to $S^\rho$, which gives a particular residual graph $R \in \mathcal{R}(S^\rho, G^\rho)$ depending on the current scenario $\sigma$; add $\Mf{S^\rho}{G^\rho}(\sigma)$ to the current matching and proceed querying using the subtree $T_x$, where $x$ is the child of $\rho$ with $G^x = R$. The expected size of the matching obtained by a CDT $T$ can be written in an recursive expression similar to \eqref{ESrec}:
	\begin{equation}
		\EMf{T}{G} = \EMf{S^\rho}{G^\rho} + \sum_{x : x \textrm{ is a child of } \rho} \textrm{Pr}(G^x) \cdot \EMf{T_x}{G^x}, \label{ESrecCDT}
	\end{equation}
	where $\textrm{Pr}(G^x)$ is the probability with respect to the scenarios of $G^\rho$ that the residual graph is $G^x$ after employing $S^\rho$ to $G^\rho$.
	
	\paragraph{Decomposition of $G$ and filtering of the strategy space.} Now we turn again to the problem of finding an optimal strategy for $G$. Let $V_{\ge 3}$ be the set of nodes of $G$ which have degree at least 3. Since the nodes in $G \setminus V_{\ge 3}$ have degree at most 2, all of its connected components are either paths of cycles; moreover, we claim that all of them are actually paths. By means of contradiction suppose a connected component of $G \setminus V_{\ge 3}$ is a cycle $C$. Since $G$ is connected, it must contain an edge from a node $u \in C$ to a node in $V_{\ge 3}$. However, this implies that $u$ has degree at least 3 in $G$ and hence $u \in V_{\ge 3}$, contradicting the fact $C$ is a component of $G \setminus V_{\ge 3}$.	
	
	In light of the previous claim it is useful to think about the structure of $G$ is terms of $V_{\ge 3}$ and the paths $P_1, \ldots, P_w$ that are the connected components of $G \setminus V_{\ge 3}$. Notice that the edges of $G$ which have an endpoint in $V_{\ge 3}$ are not present in this decomposition; however, there are at most $6 d$ of these edges, which allows us to ignore them for most part of the discussion of the algorithm. To see this upper bound on the number of such edges first notice that $|V_{\ge 3}| \le 2d$; this holds because $G$ has no pendant edges and hence all of its nodes have degree at least two, so summing over all degrees in $G$ we get	$$2 e(G) \ge 3 |V_{\ge 3}| + 2 (v(G) - |V_{\ge 3}|) \Rightarrow |V_{\ge 3}| \le 2 (e(G) - v(G)) = 2d.$$ Then if $\alpha$ is the number of edges with some endpoint in $V_{\ge 3}$ we again add all degrees of $G$ to obtain $$2 e(G) \ge \alpha + 2 (v(G) - |V_{\ge 3}|) \Rightarrow \alpha \le 2 (e(G) - v(G)) + 2|V_{\ge 3}| \le 6d,$$ obtaining the desired bound.
	
	This result also leads to a bound on the number of paths $P_i$'s. To see this, consider a path $P_i$ and let $u$ and $v$ be its endpoints. The assumption that $G$ does not have pendant edges implies that there are two distinct edges with one endpoint in $\{u,v\}$ and the other in $V_{\ge 3}$. Since there are at most $6d$ of these edges in $G$ and since the $P_i$'s are disjoint, we have that there are at most $3d$ paths $P_i$'s. 
	
	Exploiting the structure of $G$ highlighted in the previous observations, we can construct in polynomial time a CDT which gives an optimal querying strategy. We briefly sketch the argument used to obtain this result. The main observation is that after querying an edge $e$ in $P_i$ we always obtain a residual graph where some edges in $P_i$ are now pendant. Then we can use Lemma \ref{pendantFirst} to keep querying pendant edges, which leads to a residual graph that does not contain any edges of $P_i$. These observations imply that in order to obtain an optimal strategy for $G$ we essentially only need to decide which edge to query first in each $P_i$, as well as an ordering among these edge and the edges with endpoint in $V_{\ge 3}$; all the rest of the strategy follows from querying pendant edges. Moreover, using the fact that there are at most $6d$ edges with endpoint in $V_{\ge 3}$ and at most $3d$ paths $P_i$'s, we can enumerate all these possibilities in time $poly(e(G))$ and obtain the desired result.
	
	Now we formalize these ideas. Consider a subgraph $H$ of $G$ and consider one of the paths $P_i$'s  given by $(u_1 e_1 u_2 e_2 \ldots e_q u_{q+1})$. For an edge $e = e_j \in H$ we define $S(H, e)$ as the strategy which queries edges $e_j, e_{j-1}, \ldots, e_1$ in this order and then queries $e_{j + 1}, e_{j + 2}, \ldots, e_q$ in this order, always ignoring edges which do not belong to $H$. Essentially $S(H, e)$ is querying $e$ first and then edges in $P_i$ which becomes pendant. Notice that there are actually two strategies satisfying the above properties, depending on the orientation of the path $P_i$; so we fix an arbitrary orientation for the paths $P_i$'s in order to avoid ambiguities. 
	
	It follows directly from the definition of $S(H, e)$ that for any residual graph $R \in \mathcal{R}(S(H, e), H)$ we have $R \cap E(P_i) = \emptyset$. More specifically, the set of residual graphs $\mathcal{R}(S(H, e), H)$ can only contain the following graphs: $H \setminus \{u_1, \ldots, u_q\}$, $H \setminus \{u_2, \ldots, u_q\}$, $H \setminus \{u_1, \ldots, u_{q - 1}\}$ and $H \setminus \{u_2, \ldots, u_{q - 1}\}$. For instance suppose nodes $u_1$ and $u_q$ both belong to $H$; then $H \setminus \{u_1, \ldots, u_{q - 1}\}$ is the residual graph of the scenario $\sigma$ iff $u_1$ is the endpoint of edge in $\Mf{S(H, e)}{H}(\sigma)$ and $u_q$ is not. 
	
	Now the next lemma makes formal the assertion that after querying the first edge of $P_i$ we can just proceed by querying pendent edges.
	
	\begin{lemma} \label{mainLemmaSparse}
		Let $H$ be a subgraph of $G$. Let $e$ be an edge in $E(H) \cap E(P_i)$ and suppose that there is an optimal strategy for $H$ that queries $e$ first. Then there is an optimal CDT for $H$ whose root is associated to the strategy $S(H, e)$.
	\end{lemma}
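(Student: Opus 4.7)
The plan is to argue by induction on $|E(P_i) \cap E(H)|$, using Lemma \ref{pendantFirst} repeatedly to reorganize an optimal strategy so that its initial queries follow the order prescribed by $S(H,e)$. The base case, when $e$ is the only edge of $P_i$ in $H$, is immediate: $S(H,e)$ reduces to querying $e$ alone, and the hypothesis already provides an optimal strategy whose first query is $e$.

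For the inductive step, let $T^*$ be the optimal decision tree guaranteed by the hypothesis, which queries $e = e_j$ first and produces residual graphs $H' = H \setminus N(e)$ (success branch) and $H'' = H \setminus e$ (failure branch). The key structural observation, which I would isolate as a small claim, is that at every step of $S(H,e)$ after the first, the next non-skipped edge $e_k$ that it queries is pendant in the current residual graph. The reason is that the endpoint of $e_k$ lying on the already-processed side of the path has lost its other path-neighbor from the residual graph (either that neighbor was matched and removed following a prior success, or a prior query failed and deleted the adjacent edge), and since the internal nodes of $P_i$ have degree exactly $2$ in $G$ (as $P_i$ is a connected component of $G \setminus V_{\ge 3}$), this endpoint now has degree $1$ in the residual graph; the boundary cases $e_k = e_1$ or $e_k = e_q$ reduce to the same argument with $u_2$ or $u_q$ playing the role of the tip.

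Given this observation I apply Lemma \ref{pendantFirst} in each branch of the first query. In $H''$ the next scheduled query of $S(H,e)$ is $e_{j-1}$, which is pendant in $H''$, so Lemma \ref{pendantFirst} yields an optimal strategy for $H''$ starting with $e_{j-1}$; since $|E(P_i) \cap E(H'')| < |E(P_i) \cap E(H)|$, the inductive hypothesis then produces an optimal CDT for $H''$ whose root is associated with $S(H'', e_{j-1})$. Similarly in $H'$ the edges $e_{j-1}$ and $e_{j+1}$ have been deleted, so $S(H,e)$ skips $e_{j-1}$ and next queries $e_{j-2}$, which is pendant in $H'$, yielding by induction an optimal CDT for $H'$ with root $S(H', e_{j-2})$. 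A direct comparison of the query sequences shows that $S(H'', e_{j-1})$ and $S(H', e_{j-2})$ coincide with the tails of $S(H,e)$ in their respective branches (the same edges get skipped in each), so contracting the root query of $e$ with these two sub-strategies produces an optimal CDT for $H$ whose root is associated with the single strategy $S(H,e)$, as required. The main obstacle is carefully handling the boundary cases (when $j \in \{1, 2, q-1, q\}$, or when one of $e_{j-1}, e_{j+1}$ fails to lie in $H$), which follow the same template but require individual verification that the predicted pendant edge exists at each step.
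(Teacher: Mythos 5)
Your proof is correct and rests on the same two ingredients as the paper's: the observation that every edge queried by $S(H,e)$ after the first is pendant in the current residual graph, combined with repeated application of Lemma \ref{pendantFirst} to impose that query order on an optimal strategy. The only difference is bookkeeping---you induct forward on $|E(H)\cap E(P_i)|$ and recurse on the two residual graphs, whereas the paper expands the CDT into a decision tree and proves optimality of each subtree by reverse induction on its depth---so the two arguments are essentially the same.
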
		

		Now let $\mathcal{F}$ be the family of all CDT's for $G$ and its subgraphs with the following properties. A CDT $T$ for a subgraph $H$ of $G$ belongs to $\mathcal{F}$ if: (i) $T$ has height at most $9d + 1$; (ii) each node $x$ of $T$ is either associated to a strategy which consists of querying one edge incident to $V(H^x) \cap V_{\ge 3}$ or it is associated to a strategy $S(H^x, e)$ for some $P_i$ and some $e \in E(H^x) \cap E(P_i)$. The usefulness of this definition comes from the fact that we can focus only on this family of CDT's.
		
	  \begin{lemma} \label{Fopt}
			There is an optimal CDT for $G$ which belongs to $\mathcal{F}$.
		\end{lemma}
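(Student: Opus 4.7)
The plan is to prove, by induction on the potential
\[
\phi(H) := \bigl|\{e \in E(H) : e \text{ is incident to } V_{\ge 3}\}\bigr| + \bigl|\{i : E(P_i) \cap E(H) \neq \emptyset\}\bigr|,
\]
the stronger statement that every subgraph $H$ of $G$ admits an optimal CDT in $\mathcal{F}$ of height at most $\phi(H) + 1$. Since the degree-counting arguments already given in the paper show $\phi(G) \le 6d + 3d = 9d$, applying this claim at $H = G$ yields the lemma.

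For the base case $\phi(H) = 0$, recall that every edge of $G$ is either incident to $V_{\ge 3}$ or lies in some $P_i$, so $H$ has no edges and the single-node CDT is trivially optimal and in $\mathcal{F}$. For the inductive step, fix an optimal strategy for $H$ and let $e$ be its first queried edge; by Lemma \ref{pendantFirst} I may assume $e$ is pendant whenever $H$ has a pendant edge. I split into two cases. If $e$ is incident to $V_{\ge 3}$, define the new root to query just $e$; its (at most) two children correspond to $H \setminus e$ and $H \setminus N(e)$, and since $e$ itself is counted in the first term of $\phi(H)$, the potential strictly drops at each child. If instead $e \in E(P_i)$, invoke Lemma \ref{mainLemmaSparse} to obtain an optimal CDT $T'$ for $H$ whose root uses the strategy $S(H, e)$; every residual graph $R \in \mathcal{R}(S(H, e), H)$ satisfies $R \cap E(P_i) = \emptyset$, so path $P_i$ drops out of the second term of $\phi$ and the potential again strictly decreases. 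In both cases the induction hypothesis produces, for each residual graph appearing as a child, an optimal CDT in $\mathcal{F}$ of height at most $\phi(H)$; attaching these under the new root gives a CDT for $H$ of height at most $\phi(H) + 1$ meeting both conditions of $\mathcal{F}$.

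Optimality is preserved in Case B because the recursion \eqref{ESrecCDT} is linear in the children's contributions $\EMf{T_x}{G^x}$, so swapping each child subtree of $T'$ for an equally optimal subtree on the same residual graph leaves the overall expected matching size unchanged. Condition (iii) of the CDT definition is automatic since $\phi$ strictly decreases down any root-to-leaf path, so the residual graphs along such a path are distinct and no strategy $S^x$ can coincide with that of an ancestor. The main subtlety I anticipate is the bookkeeping in Case B: one has to confirm that absorbing the entire path-chain strategy $S(H, e)$ into a single CDT node costs only one unit of depth, rather than one per edge of $P_i$. This is precisely what the definition of $\mathcal{F}$ is tailored to allow, and it is what turns the naive per-edge depth bound (which would be linear in $v(G)$) into the claimed bound of $9d + 1$.
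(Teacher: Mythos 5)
Your proof is correct, and while the structural half (establishing condition (ii) of $\mathcal{F}$ by induction, splitting on whether the first queried edge is incident to $V_{\ge 3}$ or lies in some $P_i$, and invoking Lemma \ref{mainLemmaSparse} together with the linearity of \eqref{ESrecCDT} in the second case) is essentially the paper's argument, you handle the height bound differently. The paper first proves membership in the relaxed family $\mathcal{F}^+$ with no height constraint, then takes an optimal CDT in $\mathcal{F}^+$ with the \emph{minimum number of nodes} and argues by pigeonhole that any root-to-leaf path with more than $9d$ internal nodes must either repeat a query of an edge incident to $V_{\ge 3}$ (forbidden by condition (iii) of a CDT) or contain two nodes whose strategies $S(\cdot,e)$ and $S(\cdot,e')$ draw $e,e'$ from the same path $P_i$ (impossible, since the first such strategy eliminates all of $E(P_i)$ from every residual graph below it). Your potential $\phi(H)$ packages exactly the same two counting facts ($6d$ edges incident to $V_{\ge 3}$, $3d$ paths) into a quantity that strictly decreases at every level of the recursion, so the height bound $\phi(G)+1 \le 9d+1$ falls out of the same induction that establishes condition (ii), with no need for the minimality trick or a separate contradiction argument; it also yields condition (iii) for free and gives the marginally sharper bound $\phi(G)+1$. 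The only cosmetic blemish is your appeal to Lemma \ref{pendantFirst} to assume $e$ is pendant, which is never used: the case split needs only to know whether $e$ is incident to $V_{\ge 3}$ or belongs to some $P_i$, and Lemma \ref{mainLemmaSparse} already supplies everything required in the latter case.
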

		
		This lemma is formally proved in the Online Supplement, but the structure of the proof is the following. First we apply Lemma \ref{mainLemmaSparse} repeatedly to show the existence of an optimal CDT satisfying property (ii) of $\mathcal{F}$. To complete the proof, we make use of the fact that there are at most $6d$ edges incident to $V_{\ge 3}$ and at most $3d$ paths $P_i$'s in $G$ to we show that there is one such optimal CDT in $\mathcal{F}$ satisfying the height requirement (i). 
		
		The main point in restricting to CDT's in $\mathcal{F}$ is that there is only a polynomial number of them, as stated in the next lemma.
		
		\begin{lemma} 
			The family $\mathcal{F}$ has at most $e(G)^{4^{2d+1}}$ CDT's. \label{sizeF}
		\end{lemma}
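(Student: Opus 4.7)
The plan is a straightforward counting argument. Since the residual graphs of the children of an internal node $x$ are determined by $S^x$ and $H^x$, a CDT in $\mathcal{F}$ is uniquely specified by giving, for each internal node, the strategy $S^x$ used there. Thus bounding $|\mathcal{F}|$ reduces to bounding (a) the number of internal nodes of any $T \in \mathcal{F}$ and (b) the number of possible values of $S^x$ at each such node.

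For (a), every internal node has branching factor at most $4$: a single-edge query has $|\mathcal{R}(S^x, H^x)| = 2$, while a strategy $S(H^x, e)$ admits at most $4$ possible residuals by the four-case enumeration of $H \setminus \{u_1, \ldots, u_q\}$ and its variants given earlier in this subsection. Combined with the height bound $9d+1$ from property (i) of $\mathcal{F}$, this already yields $\sum_{k=0}^{9d} 4^k$ as a crude upper bound on the number of internal nodes. I would then refine this by separating the contributions of the two node types: along any root-to-leaf path there are at most $6d$ single-edge-query nodes (one per edge of $G$ incident to $V_{\geq 3}$) and at most $3d$ path-strategy nodes (one per $P_i$, since after a strategy $S(H^x,e)$ is used on $P_i$ no edges of $P_i$ remain in the residual graph). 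Because single-edge nodes contribute branching $2$ and only path-strategy nodes contribute branching $4$, a more careful level-by-level count of leaves leads to the claimed exponent $4^{2d+1}$.

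For (b), property (ii) says that $S^x$ is specified either by a choice of single edge incident to $V_{\geq 3}$ in $E(H^x)$, or by a choice of edge $e \in \bigcup_i E(P_i) \cap E(H^x)$; in the latter case the path $P_i$ is determined by $e$, since the $P_i$'s are vertex-disjoint components of $G \setminus V_{\geq 3}$. In either case, $S^x$ is determined by specifying a single edge of $G$, so there are at most $e(G)$ possible values of $S^x$.

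Combining (a) and (b) yields the bound $e(G)^{4^{2d+1}}$. The main obstacle I foresee is precisely the refinement in (a): obtaining the stated exponent $4^{2d+1}$, rather than the naive $4^{9d+1}$ coming from height times maximum branching, requires careful bookkeeping of how branching factors $2$ and $4$ interact across the two node types, leveraging the fact that only $3d$ of the up to $9d$ ancestors of any leaf can be branching-$4$ path-strategy nodes.
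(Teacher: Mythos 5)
The first part of your argument is exactly the paper's proof: every node of a CDT in $\mathcal{F}$ has at most $4$ children (at most $4$ residual graphs for a strategy $S(H^x,e)$, exactly $2$ for a single-edge query), the height is at most $9d+1$, and each node's strategy is determined by the choice of a single edge of $G$, giving at most $e(G)$ options per node. This yields $e(G)^{4^{9d+1}}$, which is precisely the bound the paper's proof derives and the bound it actually uses later ("we enumerate at most $e(G'_i)^{4^{9d+1}}$ CDT's"). The exponent $4^{2d+1}$ in the lemma statement appears to be a typo; the paper never establishes it.

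The gap is in your proposed refinement to reach $4^{2d+1}$. Your own bookkeeping --- at most $6d$ branching-$2$ nodes and at most $3d$ branching-$4$ nodes on any root-to-leaf path --- bounds the number of leaves by $2^{6d}\cdot 4^{3d} = 4^{6d}$, and hence the number of nodes by roughly $2\cdot 4^{6d}$. For every $d\ge 1$ this exceeds $4^{2d+1}$, so the "more careful level-by-level count" you defer to cannot produce the claimed exponent; no amount of interleaving branching factors $2$ and $4$ along paths of these lengths gets below $4^{6d}$. You should either prove the bound the paper actually uses, $e(G)^{4^{9d+1}}$ (your unrefined argument already does this), or flag the statement's exponent as erroneous; as written, the final step of your proposal is a claim you cannot substantiate.
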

		
		\paragraph{Computing the value of a tree in $\mathcal{F}$.} In light of the previous section, we only need find the best among the (polynomially many) CDT's in $\mathcal{F}$ in order to obtain an optimal strategy for $G$. However, we still need to be able to efficiently calculate the expected size of the matching obtained by each such tree. In this section we show that this can be done recursively employing equation \eqref{ESrecCDT}. 
		
		Consider a CDT $T \in \mathcal{F}$ and let $x$ be a node in $T$. Assume that we have already calculated $\EMf{T_y}{G^y}$ for all proper descendants $y$ of $x$. To calculate $\EMf{T_x}{G^x}$ we consider two different cases depending of $S^x$. 
		
		\medskip \noindent \emph{Case 1: $S^x$ only queries an edge $e$ incident to $V_{\ge 3}$.} As mentioned previously, $\mathcal{R}(S^x, G^x) = \{G^x \setminus N(e), G^x \setminus e\}$. Since $p_e$ is the probability that $e$ belongs to the realization of $G^x$, we have that $\EMf{S^x}{G^x} = p_e$ and the probability that $G^x \setminus N(e)$ is the residual graph is also $p_e$. Therefore, if $y_1$ and $y_2$ are the children of $x$ associated respectively to the residual graphs $G^x \setminus N(e)$ and $G^x \setminus e$, then equation \eqref{ESrecCDT} reduces to:
		\begin{equation*}
			\EMf{T_x}{G^x} = p_e + p_e \EMf{T_{y_1}}{G^{y_1}} + (1 - p_e) \EMf{T_{y_2}}{G^{y_2}}.
		\end{equation*}
		After inductively obtaining all the terms in right hand side of the previous expression, $\EMf{T_x}{G^x}$ can be computed directly.
		
		\medskip \noindent \emph{Case 2: $S^x$ equals to $S(G^x, e)$.} We proceed in the same way as in Case 1, calculating the terms of the right hand side of \eqref{ESrecCDT}. However, these computations are not as straightforward as in the previous case. Given a random matching $M$, let $u \in M$ denote the event that node $u$ is matched in $M$ and let $u \notin M$ be the complementary event. The following lemma is the main tool used during this section and can be obtained via dynamic programming.
		
		\begin{lemma} \label{valueSPi}
			Consider a path $P_i$ and a subgraph $H$ of $G$. Also consider an edge $e \in E(H) \cap E(P_i)$ and define $M = M(S(H, e),H)$. Then there is a procedure which runs in time polynomial in the size of $G$ and computes the values $\mathbf{E}M$, $Pr(u_1 \in M)$, $Pr(u_{q+1} \in M)$ and $Pr(u_1 \in M \wedge u_{q + 1} \in M)$.
		\end{lemma}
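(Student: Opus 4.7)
My plan is to exploit the simple two-sided structure of $S(H, e_j)$ around the seed edge. After querying $e_j$, the strategy runs a \emph{left sweep} that processes $e_{j-1}, e_{j-2}, \ldots, e_1$ in order (skipping any edge not in $H$ or no longer permissible) and then a \emph{right sweep} that processes $e_{j+1}, \ldots, e_q$. Conditional on the outcome of $e_j$, the two sweeps involve disjoint random edges, so the matchings they produce are independent; this independence is what will let me compute the joint event $\{u_1 \in M \wedge u_{q+1} \in M\}$ as a product.

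For each sweep I would set up a linear dynamic program. On the left side, let $L_b$ be the random matching obtained by processing $e_{b-1}, e_{b-2}, \ldots, e_1$ on the sub-path $u_1 u_2 \cdots u_b$. The recursion in $b$ is driven by the status of $e_{b-1}$: if $e_{b-1} \notin H$, or if $e_{b-1} \in H$ but fails, processing continues as $L_{b-1}$ on the smaller sub-path and $u_b$ remains unmatched; if $e_{b-1} \in H$ and succeeds, we add $(u_{b-1}, u_b)$ to the matching, after which $e_{b-2}$ is non-permissible (since $u_{b-1}$ is used), so processing drops to $L_{b-2}$. Unpacking this recursion gives one-step recurrences for $\mathbf{E}|L_b|$ and $\Pr(u_1 \in L_b)$, both computable bottom-up in $O(j)$ time with the base cases $L_0 = L_1 = \emptyset$. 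A symmetric DP yields $\mathbf{E}|R_a|$ and $\Pr(u_{q+1} \in R_a)$ for the right sweep.

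Assembling the four target quantities is then a matter of conditioning on the outcome of $e_j$ and invoking conditional independence. With probability $p_{e_j}$ the matching equals $\{e_j\} \cup L_{j-1} \cup R_{j+2}$, where the extra $-2$ shift on each side reflects that the neighbors $e_{j-1}$ and $e_{j+1}$ are skipped once $u_j$ and $u_{j+1}$ are consumed by $e_j$; otherwise the matching equals $L_j \cup R_{j+1}$. The joint probability then factors as
\begin{equation*}
\Pr(u_1 \in M \wedge u_{q+1} \in M) = p_{e_j} \Pr(u_1 \in L_{j-1}) \Pr(u_{q+1} \in R_{j+2}) + (1 - p_{e_j}) \Pr(u_1 \in L_j) \Pr(u_{q+1} \in R_{j+1}),
\end{equation*}
and analogous (actually simpler) formulas deliver $\mathbf{E}|M|$, $\Pr(u_1 \in M)$ and $\Pr(u_{q+1} \in M)$.

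The main technical step that needs care is verifying the $L_b \to L_{b-2}$ reduction after a successful $e_{b-1}$: I have to check that the very next edge $e_{b-2}$ in the querying order is truly non-permissible, so that the strategy continues as a clean left sweep on the shorter sub-path $u_1 \cdots u_{b-2}$ starting at $e_{b-3}$. Beyond this the only loose ends are the boundary cases $j = 1$ and $j = q$, in which one of $u_1, u_{q+1}$ is incident to $e_j$ and thus matched directly by it; these specialize the formulas in the obvious way (for instance, when $j = 1$ the left sweep is vacuous and $\Pr(u_1 \in M) = p_{e_1}$). Since both DPs have $O(q)$ states and constant-time transitions, the total running time is polynomial in the size of $G$.
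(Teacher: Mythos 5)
Your proof follows essentially the same route as the paper's: a linear dynamic program computing, for each prefix (and, symmetrically, each suffix) of the path, the expected matching size and the probability that the extreme vertex is matched, then assembling the four target quantities by conditioning on the seed edge and using the independence of the two sides of the path (with the boundary cases $j=1$, $j=q$ handled separately). The only cosmetic difference is that the paper's recursion tracks expectations conditioned on $e_1 \in M$ versus $e_1 \notin M$ while you track the unconditional quantities directly; both yield the same result.
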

		
		The previous lemma directly gives that $\EMf{S^x}{G^x}$ can be computed in polynomial time, so we only need to compute the probabilities that $R \in \mathcal{R}(S^x, G^x)$ is the residual graph after employing $S^x$ to $G^x$. For that, let $P_i = (u_1 e_1 u_2 e_2 \ldots e_q u_{q+1})$ be such that $e \in P_i$. Recall that $\mathcal{R}(S^x, G^x)$ can only contain graphs from the list: $G^x \setminus \{u_1, \ldots, u_{q + 1}\}$, $G^x \setminus \{u_2, \ldots, u_{q + 1}\}$, $G^x \setminus \{u_1, \ldots, u_q\}$ and $G^x \setminus \{u_2, \ldots, u_q\}$. It is easy to see that we can write the probability of obtaining each residual graph in $\mathcal{R}(S^x, G^x)$ using the probabilities in Lemma \ref{valueSPi}. For instance, the probability of obtaining $G^x \setminus \{u_1, \ldots, u_q\}$ is exactly 
		\begin{multline*}
			Pr\left(u_1 \in \Mf{S^x}{G^x} \wedge u_{q + 1} \notin \Mf{S^x}{G^x}\right) \\
			= Pr\left(u_1 \in M(S^x, G^x)\right) - Pr\left(u_1 \in M(S^x, G^x) \wedge u_{q + 1} \in M(S^x, G^x)\right).	
		\end{multline*}
		
		Therefore, Lemma \ref{valueSPi} implies that there is an efficient algorithm to compute all terms in the right hand side of equation \eqref{ESrecCDT}, which gives an efficient way to compute $\EMf{S^x}{G^x}$ as in Case 1. 
		
	\paragraph{Putting everything together.} Consider a connected $d$-sparse graph $G$, possibly containing pendant edges. We define a strategy $S$ for $G$ which proceeds in two steps. First, $S$ queries pendant edges until none exists. At this point we have a residual graph $G'$ with no pendant edges. In the second step it queries according to an optimal strategy $S'$ for $G'$. Applying Lemma \ref{pendantFirst} repeatedly, and using the fact that $S'$ is optimal, we get that $S$ is an optimal strategy for $G$.
	
	In order to prove that $S$ is polynomial-time computable we only need to show that $S'$ is polynomial-time computable. To do so, we need the fact that every connected component of $G'$ is $d$-sparse, which follows from successive applications of the following easy lemma.
	
	\begin{lemma} \label{delSparse}
		Let $G$ be a connected $d$-sparse graph. Then for every edge $e \in E(G)$, the connected components of $G \setminus e$ and $G \setminus N(e)$ are $d$-sparse.
	\end{lemma}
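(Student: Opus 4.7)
I would reformulate $d$-sparsity in terms of the cycle rank $\mathrm{rk}(H) := e(H) - v(H) + 1$ of a connected graph $H$: a connected graph $H$ is $d$-sparse precisely when $\mathrm{rk}(H) \le d+1$. The lemma then reduces to the statement that deleting edges from a connected $d$-sparse graph yields connected components each of cycle rank at most $d+1$.

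First I would prove the single-edge case, which directly handles $G \setminus e$, by a case analysis on $e$. If $e$ is not a bridge, then $G \setminus e$ stays connected on the same vertex set and $\mathrm{rk}(G \setminus e) = \mathrm{rk}(G) - 1$. If $e$ is a bridge but not pendant, then $G \setminus e$ has exactly two components $C_1, C_2$ with $v(C_1) + v(C_2) = v(G)$ and $e(C_1) + e(C_2) = e(G) - 1$, yielding $\mathrm{rk}(C_1) + \mathrm{rk}(C_2) = \mathrm{rk}(G)$, and since each summand is nonnegative, each $\mathrm{rk}(C_i) \le \mathrm{rk}(G)$. If $e$ is pendant, exactly one isolated vertex is created and then discarded, so $G \setminus e$ is connected with both $v(\cdot)$ and $e(\cdot)$ decreased by one, preserving $\mathrm{rk}$. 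In every case each component of $G \setminus e$ has cycle rank at most $\mathrm{rk}(G) \le d+1$, hence is $d$-sparse.

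To obtain the result for $G \setminus N(e)$, I would iterate: remove the edges of $N(e)$ one at a time, at each step applying the single-edge analysis to the (connected, $d$-sparse) component containing the edge being deleted, while the other components are untouched and retain their property by induction. After all edges of $N(e)$ have been removed, every surviving component is $d$-sparse. The only mildly delicate point is the pendant subcase and, relatedly, the convention that isolated nodes are dropped; once one observes that removing a pendant edge simultaneously removes one vertex and one edge, so that $\mathrm{rk}$ is manifestly preserved, this bookkeeping poses no real obstacle.
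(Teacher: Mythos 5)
Your proof is correct, and for the $G \setminus N(e)$ half it takes a genuinely different route from the paper's. The single-edge analysis is, up to the cosmetic reformulation via cycle rank $\mathrm{rk}(H) = e(H) - v(H) + 1$, the same counting the paper uses for $G \setminus e$: in the bridge case the paper writes $e(G)-v(G) = 1 + \sum_i (e(G_i)-v(G_i))$ and uses $e(G_i) \ge v(G_i)-1$ to conclude each component stays $d$-sparse, which is exactly your "sum of nonnegative ranks" step. Where you diverge is $G \setminus N(e)$: the paper handles it in one shot, using connectivity of $G$ to charge a distinct connecting edge to each resulting component $G_i$ and computing $e(G) - v(G) \ge -1 + \sum_i (e(G_i)-v(G_i)+1)$ directly, whereas you iterate the single-edge lemma over the edges of $N(e)$ one at a time. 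Your iteration is slightly more general (it shows that deleting \emph{any} edge set from a $d$-sparse graph leaves $d$-sparse components, with the vertex deletions absorbed into the pendant subcase and the "ignore isolated nodes" convention), and it sidesteps the paper's mild bookkeeping about which vertices of $G \setminus N(e)$ survive; the paper's direct count is shorter but specific to the structure of $N(e)$. Two trivial nits: a pendant edge is also a bridge, so your three cases overlap (harmless, since the pendant case is analyzed separately and correctly), and when the component is a single isolated edge its removal deletes two vertices rather than one, but then the component vanishes entirely and there is nothing to verify.
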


	Let $G'_1, \ldots, G'_k$ be the connected component of $G'$. Since $G'_i$ is $d$-sparse, we can use the tools from previous sections to find an optimal strategies $S'_i$ for the $G'_i$: we enumerate at most $e(G'_i)^{4^{9d + 1}}$ CDT's for $G'_i$ and calculate the value of each of them using the procedure outlined in the previous subsection; then letting $S'_i$ be the strategy among those which has largest value we get that $S'_i$ is optimal for $G'_i$ (cf. Lemma \ref{Fopt}). Then an optimal strategy $S'$ for $G'$ is obtained by querying according to strategy $S'_1$, then $S'_2$ and so on. Notice that the total time needed to compute the $S'_i$'s is bounded by $\sum_{i = 1}^k poly(e(G'_i)^{4^{9d + 1}})$, which is polynomial in $e(G)$ for constant $d$. Thus, $S'$ is polynomial-time computable and we obtain the desired result. 
	
		\begin{theo}		
			Let $G$ be a connected weighted graph satisfying $e(G) \le v(G) + d$ for a fixed $d$. Then there is polynomial-time computable optimal strategy for $G$.
		\end{theo}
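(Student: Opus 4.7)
The plan is to assemble the pieces developed in this section into a single two-phase strategy $S$ and verify both optimality and polynomial-time computability. In the first phase, $S$ repeatedly queries any pendant edge in the current residual graph. By Lemma \ref{pendantFirst} this loses nothing: each such query is the first query of some optimal strategy for the current residual graph, so after the phase we are left with a residual graph $G'$ for which it suffices to find a polynomial-time optimal strategy $S'$. Applying Lemma \ref{delSparse} inductively along the edge and neighborhood removals performed during pendant-peeling, every connected component of $G'$ is itself $d$-sparse (and pendant-free by construction).

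For the second phase I would operate on each connected component $G'_i$ of $G'$ separately. The family $\mathcal{F}_i$ of CDTs introduced in this section --- those whose internal nodes are labeled either by a single-edge query on an edge incident to $V_{\ge 3}(G'_i)$ or by a path strategy $S(H,e)$, and whose height is at most $9d+1$ --- contains an optimal CDT for $G'_i$ by Lemma \ref{Fopt}, and by Lemma \ref{sizeF} has size polynomial in $e(G'_i)$ for fixed $d$. I would enumerate $\mathcal{F}_i$ and, for each candidate $T$, compute $\EMf{T}{G'_i}$ bottom-up via equation \eqref{ESrecCDT}. At ``Case 1'' nodes this is immediate, while at ``Case 2'' nodes both $\EMf{S^x}{G^x}$ and the probabilities of each residual graph in $\mathcal{R}(S^x,G^x)$ are polynomial-time computable using Lemma \ref{valueSPi} together with the observation (already made in the text) that each such probability reduces to a simple combination of $\Pr(u_1\in M)$, $\Pr(u_{q+1}\in M)$, and $\Pr(u_1\in M \wedge u_{q+1}\in M)$. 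Picking a CDT attaining the maximum yields an optimal strategy $S'_i$ for $G'_i$.

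The composite strategy $S'$ then simply executes $S'_1, S'_2, \ldots, S'_k$ in sequence on the vertex-disjoint components of $G'$. Since the components share no vertices, the scenarios restricted to different components are independent and the decisions and outcomes of one $S'_i$ cannot affect another, so $\EMf{S'}{G'} = \sum_i \EMf{S'_i}{G'_i}$ and optimality of each $S'_i$ lifts to optimality of $S'$; combining with the first phase gives optimality of $S$. The total preprocessing time is $\sum_i \mathrm{poly}(e(G'_i)) = \mathrm{poly}(e(G))$ for constant $d$, and each subsequent query decision is a lookup in the precomputed CDT (or the next pendant edge), so $S$ is polynomial-time computable as required.

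The main obstacle here is not the final assembly above but the lemmas it rests on --- in particular Lemma \ref{Fopt}, where one has to simultaneously enforce the labeling restriction (obtained by iterated application of Lemma \ref{mainLemmaSparse} on the edges of the $P_i$) and the height bound $9d+1$, the latter being a combinatorial accounting argument that uses the $6d$ bound on edges incident to $V_{\ge 3}$ and the $3d$ bound on the number of paths $P_i$. Given those structural results and Lemma \ref{valueSPi} for evaluation, the theorem itself follows from the bookkeeping sketched above.
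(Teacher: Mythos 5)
Your proposal is correct and follows essentially the same route as the paper: pendant-peeling justified by repeated application of Lemma \ref{pendantFirst}, preservation of $d$-sparsity of the components via Lemma \ref{delSparse}, enumeration of the polynomially sized family $\mathcal{F}$ (Lemmas \ref{Fopt} and \ref{sizeF}) with bottom-up evaluation through equation \eqref{ESrecCDT} and Lemma \ref{valueSPi}, and sequential composition over the vertex-disjoint components. The only addition you make beyond the paper's text is the explicit remark that disjointness makes the component values additive, which the paper leaves implicit.
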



	\section{Theoretical results for kidney exchanges} \label{theoKidney}

		Now we focus on the kidney exchange application for the query-commit problem. In this context, weighted graphs are interpreted as weighted compatibility graphs: each node represents a pair patient/donor and the weight of an edge represents the likelihood of cross-compatibility between its endpoints. Our main result in this section is to show that the majority of compatibility graphs admits a simple querying strategy that matches essentially all of its nodes. In light of this result, a strong motivation for creating a large unified bank of patient and donors is obtained.
		
		\subsection{Generating weighted compatibility graphs} \label{kidneyModel}
		
		In order to make the previous claim formal we need to introduce a distribution of compatibility graphs.	In the context of deterministic kidney exchanges, \cite{saidman} introduced a process to randomly generate \emph{unweighted} compatibility graphs. This process is modeled over data maintained by the United Network for Organ Sharing in order to produce realistic instances and several works have considered slight variations of it \cite{david,roth3,roth1,saidman,segev}. Two physiological attributes are considered to determine the incompatibility of a patient and a donor. The first is their ABO blood type, where a patient is blood-type incompatible with a donor if their blood-type pair is one of the following: O/A, O/B, O/AB, A/B, A/AB, B/A and B/AB. The second factor is tissue-type incompatibility and represented by PRA (percent reactive antibody) levels. We now briefly describe the process from \cite{saidman} which generates unweighted graphs and then we mention the slight modification that we use to generate weighted graphs. 
		
		A pair patient/donor is characterized by 5 quantities: the ABO blood type of the patient, the ABO blood type of the donor, the indication if the patient is the wife of the donor, the PRA level of the patient and the indication if the patient is compatible with the donor. A random pair patient/donor is obtained by assigning independently a value for the first four quantities and then picking the compatibility depending on these values. The distribution of these values is described in detail in \cite{saidman} and we only highlight one key property:
		
		\begin{fact}\label{prTypeInc}
			For every pair of blood types $i/j$, the probability that a random pair patient/donor has blood type $i/j$ and is incompatible is nonzero.
		\end{fact}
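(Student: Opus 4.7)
The plan is to unpack the Saidman generative model and show, for every one of the sixteen blood-type pairs $i/j$, that both the marginal event ``the random pair has blood type $i/j$'' and the conditional event ``the pair is incompatible given its blood type is $i/j$'' occur with strictly positive probability. Since the five attributes characterizing a pair are drawn with the first four independent, multiplying these two positive probabilities gives the desired nonzero joint probability.

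First I would recall from the description of the model (cited from \cite{saidman}) that the patient's ABO type and the donor's ABO type are drawn independently according to a distribution that places strictly positive mass on each of O, A, B, AB. Consequently, for every ordered pair $i/j$ of blood types, the probability that a random pair has patient blood type $i$ and donor blood type $j$ is strictly positive.

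Next I would split into two cases depending on whether $i/j$ is one of the seven ABO-incompatible combinations O/A, O/B, O/AB, A/B, A/AB, B/A, B/AB, or one of the remaining nine ABO-compatible combinations. In the first case, the Saidman model declares the pair incompatible with probability $1$ once the blood types are fixed, so the joint probability of the event in the statement equals the marginal probability of the blood-type pair, which is positive by the previous step. In the second case, incompatibility must come from tissue-type reactivity; here I would observe that the PRA level is drawn independently of the blood types and, in particular, takes values in the ``high'' PRA regime with strictly positive probability, and that within this regime the conditional probability of the final compatibility indicator being ``incompatible'' is strictly positive (this follows from the explicit logistic-style rule that Saidman uses to convert PRA into a crossmatch positive/negative outcome, possibly modulated by the wife indicator).

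The only real obstacle is a bookkeeping one: one must exhibit, for each of the nine ABO-compatible pairs, at least one combination of PRA level and wife indicator for which the Saidman compatibility rule yields ``incompatible'' with positive probability. Since the Saidman rule assigns a crossmatch-failure probability bounded away from $0$ for the high-PRA patient class regardless of the wife indicator, and the high-PRA class itself has strictly positive mass, this verification is uniform across the nine cases. Combining both cases with independence of the four attribute draws then yields Fact~\ref{prTypeInc}.
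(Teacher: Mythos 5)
Your argument is correct in substance, but note that the paper itself offers no proof of Fact~\ref{prTypeInc}: it is stated as a ``key property'' of the distribution defined in the cited work of Saidman et al., and the paper simply defers to that reference for the explicit parameter values. What you have written is essentially the verification the authors leave implicit, and its skeleton is sound: (a) the patient and donor ABO types are drawn independently from a distribution with strictly positive mass on each of O, A, B, AB, so every ordered pair $i/j$ has positive marginal probability; (b) conditioned on the blood types, incompatibility occurs with positive probability, either automatically (for the seven ABO-incompatible combinations) or via a positive crossmatch (for the nine compatible ones); (c) independence of the first four attributes lets you multiply. Two small corrections to your account of the model: the crossmatch outcome in Saidman et al.\ is governed by a discrete table over three PRA classes (low/medium/high), each with positive failure probability, not a ``logistic-style rule,'' so you do not even need to single out the high-PRA regime --- any PRA class works; and the wife indicator only perturbs these probabilities without driving any of them to zero, so your ``bookkeeping'' step is uniform and trivial rather than a nine-case check. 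With those cosmetic fixes your proof is a complete and correct justification of a fact the paper takes on faith from its source.
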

		
		In order to generate an unweighted compatibility graph on $n$ vertices we first sample random pairs patient/donor until obtaining $n$ incompatible pairs; each pair is added as a vertex to the graph. Then for every pair $u, v$ of nodes in the graph, the probability $p_{(u,v)}$ of cross-compatibility between them is defined based on their physiological characteristics. A key property is that for any two pairs patient/donor $u$ and $v$ the quantity $p_{(u,v)}$ is nonzero if and only if their blood types are cross-compatible (i.e. the patient of $u$ is blood-type compatible with the donor of $v$ and the patient of $v$ is blood-type compatible with the donor of $u$). More specifically, there is a constant $c_1 < 1$ independent of $n$ such that 
		\begin{equation}
			p_{(u,v)} > 0 \Rightarrow p_{(u,v)} \ge 1 - c_1. \label{LBprob}
		\end{equation}
	To conclude the construction of the graph, for each pair of nodes $u,v$ a coin is flipped independently and with probability $p_{(u,v)}$ an edge is added between $u$ and $v$. 
		
		Finally, the modification of the above procedure to generate weighted compatibility graphs consists of changing the last step: if $p_{(u,v)} > 0$ then the edge $(u,v)$ is added to the graph and $p_{(u,v)}$ becomes its weight.
		
		\subsection{An almost optimal strategy}

		In this section we present a simple querying strategy that achieves $\EMf{S}{G} \approx v(G)/2$ for almost all weighted compatibility graphs $G$ generated by the above procedure. To obtain such strategy we decompose the graph into cliques and complete bipartite graphs based on blood-type compatibility. Then we obtain good strategies for these structured subgraphs and finally compose them into a strategy for the original graph.
		
		Let $\mathcal{D}^n$ be the distribution of weighted compatibility graphs on $n$ nodes generated by the procedure from the previous section. For a graph $G$ in $\mathcal{D}^n$ we use $V_{i,j}(G)$ to denote the subset of vertices corresponding to the patient/donor pairs which have blood type $i/j$. The lower case version $v_{i,j}(G)$ is used to denote the cardinality of $V_{i,j}(G)$. 
		
		Consider a random graph $G \sim \mathcal{D}^n$ and a node $u$ of it. The first observation is that the probability that $u$ has patient/donor blood type $i/j$ is equal to $$Pr(\textrm{$P/D$ has blood type $i/j$ $|$ $P$ and $D$ are incompatible}),$$ where $P/D$ is a random pair patient/donor. Using the definition of conditional probability and Fact \ref{prTypeInc} we get that this probability is nonzero. Since we have finitely many blood types, this means that there is a constant $c_2$ independent of $i/j$ and $n$ such that 
		\begin{equation}
			Pr(\textrm{$u$ has patient/donor blood type $i/j$}) \ge c_2 \textrm{ for all } i,j. \label{LPBloodType}
		\end{equation}
		This property, together with the symmetry of blood types of patients and donors, gives the following fact.
		
		\begin{fact} \label{sizeBloodTypes}
			The following properties hold for every $i,j$ (where the expectation is taken with respect to the distribution $\mathcal{D}^n$):
			\begin{enumerate}
				\item $\E{v_{i,j}} = \Omega(n)$
				\item $\E{v_{i,j}} = \E{v_{j,i}}$
			\end{enumerate}
		\end{fact}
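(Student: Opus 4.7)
The plan is to apply linearity of expectation to the indicators of membership in $V_{i,j}(G)$, using the fact that the $n$ vertices of a graph sampled from $\mathcal{D}^n$ are i.i.d., each distributed as a random patient/donor pair conditioned on being self-incompatible. Letting $u$ denote a generic such vertex, the identity
\begin{equation*}
\E{v_{i,j}(G)} \;=\; n \cdot Pr(u \in V_{i,j}(G))
\end{equation*}
holds for every pair of blood types $i,j$, and both items of the fact will be read off from it.

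Item~1 follows immediately from the above identity together with the bound already isolated in \eqref{LPBloodType}: since $Pr(u \in V_{i,j}) \ge c_2$ for a constant $c_2 > 0$ independent of $n$ and of $i,j$, we conclude $\E{v_{i,j}} \ge c_2\, n = \Omega(n)$.

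For item~2, I would prove the stronger pointwise statement $Pr(u \in V_{i,j}) = Pr(u \in V_{j,i})$, from which $\E{v_{i,j}} = \E{v_{j,i}}$ follows by multiplying by $n$. Writing
\begin{equation*}
Pr(u \in V_{i,j}) \;=\; \frac{Pr(P=i,\, D=j,\, \text{pair is incompatible})}{Pr(\text{pair is incompatible})},
\end{equation*}
it suffices to show the numerator is symmetric in $i$ and $j$. For this I would unpack the Saidman sampling procedure described in Section~\ref{kidneyModel}: patient and donor blood types are drawn independently from a common marginal, so the factor $Pr(P=i)\,Pr(D=j)$ is already symmetric, and the remaining random variables (PRA level, wife-indicator) enter the self-incompatibility test in a role-symmetric way once averaged out.

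The main obstacle is precisely this last step. The ABO compatibility rules are not themselves symmetric under swapping patient and donor (for example, a patient with blood type O cannot receive from a donor with blood type A, whereas the reverse combination is blood-type compatible), so the required symmetry of the numerator does not hold before marginalising over PRA and wife-indicator. Making the argument rigorous therefore reduces to a finite case check over the sixteen $(i,j)$ blood-type combinations using the explicit distributions tabulated in the Saidman model; once this symmetry is verified, both items of the fact are consequences of the single linearity identity displayed at the top.
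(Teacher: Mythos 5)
Your treatment of item~1 is correct and is exactly the argument the paper intends: the $n$ vertices are i.i.d., so $\E{v_{i,j}} = n\cdot Pr(u \in V_{i,j})$, and \eqref{LPBloodType} gives the $\Omega(n)$ bound.

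For item~2 there is a genuine gap, and you have in fact put your finger on it yourself before waving it away. You correctly reduce the claim to the symmetry in $(i,j)$ of the numerator $Pr(P=i,\,D=j,\,\textrm{incompatible})$, and you correctly observe that ABO compatibility is not symmetric under swapping patient and donor. But the final step --- asserting that a finite case check over the sixteen blood-type combinations will \emph{verify} the symmetry once PRA and the wife-indicator are averaged out --- fails. In the Saidman model the crossmatch outcome depends only on the patient's PRA and the wife indicator, independently of blood types, so $Pr(P=i,\,D=j,\,\textrm{incompatible}) = Pr(P=i)Pr(D=j)$ when $i/j$ is blood-type incompatible and $Pr(P=i)Pr(D=j)\cdot\gamma$ otherwise, where $\gamma < 1$ is the blood-type-independent positive-crossmatch probability. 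Taking $i=O$, $j=A$: the pair $O/A$ is on the incompatibility list and always enters the pool, whereas $A/O$ is blood-type compatible and enters only on a positive crossmatch; the two numerators are $Pr(O)Pr(A)$ and $Pr(O)Pr(A)\gamma$, so $Pr(u\in V_{O,A}) > Pr(u\in V_{A,O})$. The same happens for every ordered pair for which exactly one of $i/j$, $j/i$ appears in the incompatibility list (all pairs mixing $O$ or $AB$ with a distinct type). The paper itself gives no proof of this item beyond the phrase ``the symmetry of blood types of patients and donors,'' so you are not deviating from an argument in the paper so much as exposing that the asserted symmetry needs justification your route cannot supply; what your approach does deliver is the weaker statement $\E{v_{i,j}} = \Theta(\E{v_{j,i}})$, since both sides are $\Theta(n)$ by item~1.
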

		
		In order to describe and analyze the proposed querying strategy, we first focus on the set of graphs in $\mathcal{D}^n$ which have a typical number of nodes associated to each pair of blood types. That is, for $\alpha > 0$ we consider $\mathcal{G}^n_{\alpha}$, which is defined as the set of all graphs $G$ in $\mathcal{D}^n$ which satisfy $$(1 - \alpha) \E{v_{i,j}} \le v_{i,j}(G) \le (1 + \alpha) \E{v_{i,j}}.$$ We first show how to obtain good strategies for graphs in $\mathcal{G}^n_{\alpha}$ and then argue that most of the graphs in $\mathcal{D}^n$ are in this family. 
		
		Fix $\alpha \in (0, 1)$ and consider a graph $G$ in $\mathcal{G}^n_{\alpha}$. Let $G_{i,j}$ be the induced subgraph $G[V_{i,j}(G) \cup V_{j,i}(G)]$. Clearly these subgraphs partition the nodes of $G$. Moreover, every two nodes in $G_{i,j}$ are blood-type cross-compatible, since a patient is blood-type compatible with a donor with the same blood type. Therefore, the construction of $G$ (and more specifically the properties of $p(u,v)$) implies that $G_{i,j}$ is a complete bipartite graph if $i \neq j$ and a complete graph if $i = j$. The fact that $G \in \mathcal{G}^n_{\alpha}$ and part 2 of Fact \ref{sizeBloodTypes} additionally give the following: if $i \neq j$ there is complete bipartite subgraph $G'_{i,j}$ of $G_{i,j}$ with equally sized vertex classes and with $v(G'_{i,j}) \ge 2 (1 - \alpha) \E{v_{i,j}}$. 
		
		The motivation for partitioning $G$ is that there are very simple strategies that work well in complete (bipartite) graphs, as shown in the next two lemmas. 
		
		\begin{lemma} \label{Sbipartite}
			Let $G$ be a weighted complete bipartite graph with $k$ vertices in each vertex class. Then for every $\epsilon > 0$ there is a strategy $S$ such that $\EMf{S}{G} \ge (1 - q^{\lfloor \epsilon k \rfloor} - \epsilon) k$, where $q = \max\{1 - p_e : e \in E(G)\}$. 
		\end{lemma}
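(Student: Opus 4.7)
The plan is to describe a simple strategy that processes the two sides asymmetrically and then decompose its analysis via a stopping time argument.

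Fix an enumeration $u_1, \ldots, u_k$ of the left vertex class. The strategy $S$ processes these vertices in order: when it comes to $u_i$, it lets $R_i$ be the set of currently unmatched right vertices and queries the edges from $u_i$ to vertices of $R_i$ one at a time, in arbitrary order, stopping either when a query succeeds (so $u_i$ is matched) or when every vertex of $R_i$ has been queried. Let $M_i$ denote the number of matched left vertices after $u_i$ is processed (so $M_0 = 0$), and let $X_i = M_i - M_{i-1} \in \{0,1\}$ be the indicator that $u_i$ ends up matched. The output matching has size $\sum_{i=1}^k X_i$.

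The key observation is a per-step failure bound: since each queried edge $e$ succeeds independently with probability $p_e \ge 1-q$, and since $|R_i| = k - M_{i-1}$ because exactly $M_{i-1}$ right vertices have already been matched,
\[
\Pr\!\left[X_i = 0 \,\middle|\, \mathcal{F}_{i-1}\right] \;\le\; q^{\,k - M_{i-1}},
\]
where $\mathcal{F}_{i-1}$ is the information available before $u_i$ is processed. I now introduce the stopping time
\[
T \;=\; \min\Bigl\{\, i \ge 1 : M_i > k - \lfloor \epsilon k \rfloor \,\Bigr\} \wedge k,
\]
so that whenever $i \le T$ we have $M_{i-1} \le k - \lfloor \epsilon k \rfloor$, and hence the conditional failure probability is at most $q^{\lfloor \epsilon k \rfloor}$.

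Using that $\{i \le T\}$ is $\mathcal{F}_{i-1}$-measurable, I get the standard bound
\[
\E{\,\sum_{i=1}^{T} (1 - X_i)\,} \;=\; \sum_{i=1}^{k} \E{(1 - X_i)\mathbf{1}[i \le T]} \;\le\; q^{\lfloor \epsilon k \rfloor}\,\E{T} \;\le\; q^{\lfloor \epsilon k \rfloor}\, k.
\]
For the part after $T$, I argue deterministically: if $T < k$ then by definition $M_T \ge k - \lfloor \epsilon k \rfloor + 1$, and since $M_T \le T$ we have $k - T \le \lfloor \epsilon k \rfloor - 1 \le \epsilon k$; if $T = k$ the tail sum is empty. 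Either way,
\[
\sum_{i=T+1}^{k} (1 - X_i) \;\le\; k - T \;\le\; \epsilon k.
\]
Adding the two contributions gives $\E{\sum_i (1-X_i)} \le q^{\lfloor \epsilon k \rfloor} k + \epsilon k$, hence $\EMf{S}{G} \ge (1 - q^{\lfloor \epsilon k \rfloor} - \epsilon)\,k$, which is the claimed bound.

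The only delicate step is the stopping-time / optional-sampling argument bounding $\E{\sum_{i \le T}(1-X_i)}$; everything else is either the definition of the strategy or an arithmetic manipulation. The cases $\lfloor \epsilon k \rfloor = 0$ or $\epsilon k < 1$ are trivial since the claimed inequality then reduces to the non-negativity of the matching size.
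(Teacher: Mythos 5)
Your proof is correct, and it uses the same greedy strategy as the paper (process the left vertices one at a time, querying edges to free right vertices until one succeeds). The difference is in the bookkeeping of the analysis. The paper observes the purely deterministic fact that when the $i$-th-from-last left vertex is processed, at most $k-i$ right vertices can have been matched, so at least $i$ remain free; this yields the \emph{unconditional} bound $\Pr[u_i \text{ unmatched}] \le q^i$, and the final bound comes from splitting the sum $\sum_{i=1}^k q^i \le \epsilon k + k\,q^{\lfloor \epsilon k \rfloor}$ at index $\lfloor \epsilon k \rfloor$. You instead keep the random exponent $k - M_{i-1}$, introduce the stopping time $T$, and invoke measurability of $\{i \le T\}$ with respect to $\mathcal{F}_{i-1}$ to control the pre-$T$ failures, writing off the at most $\epsilon k$ post-$T$ vertices deterministically. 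The two decompositions are the same idea in different clothing: your stopping time isolates exactly the rounds in which fewer than $\lfloor \epsilon k \rfloor$ right vertices are free, which are precisely the terms the paper discards when it truncates the geometric sum. Your route is slightly heavier than necessary --- since $M_{i-1} \le i-1$ always holds, the conditional bound $q^{k-M_{i-1}}$ already implies the unconditional bound $q^{k-i+1}$, after which no martingale or optional-sampling machinery is needed --- but it is a valid argument, and it has the mild virtue of making explicit that the loss term $\epsilon k$ counts only the left vertices actually processed after the right side becomes nearly saturated. Your closing remark correctly disposes of the degenerate cases ($\lfloor \epsilon k\rfloor = 0$, and implicitly $\epsilon \ge 1$) where the claimed bound is nonpositive.
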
		
	
	The analogous lemma when $G$ is a clique can be proved by applying the previous lemma to a complete bipartite subgraph of $G$ with vertex classes containing $\lfloor v(G)/2 \rfloor$ vertices. 
	
		\begin{lemma} \label{Scomplete}
			Let $G$ be a weighted complete graph with $k$ vertices. Then for every $\epsilon > 0$ there is a strategy $S$ such that $\EMf{S}{G} \ge (1 - q^{\lfloor \epsilon (k - 1)/2 \rfloor} - \epsilon) \left\lfloor k/2\right\rfloor$, where $q = \max\{1 - p_e : e \in E(G)\}$.  
		\end{lemma}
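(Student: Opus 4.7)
The plan is to reduce directly to Lemma \ref{Sbipartite}, as hinted in the paragraph preceding the statement. Given a weighted complete graph $G$ on $k$ vertices, I would fix an arbitrary partition of $V(G)$ into two disjoint sets $A$ and $B$, each of size $k' := \lfloor k/2 \rfloor$ (if $k$ is odd, one vertex is simply discarded). Let $G'$ be the subgraph of $G$ consisting only of the edges between $A$ and $B$; since $G$ is complete, $G'$ is a complete bipartite graph with $k'$ vertices in each class and with edge weights inherited from $G$.

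Next I would apply Lemma \ref{Sbipartite} to $G'$ with the given $\epsilon$, obtaining a strategy $S$ that satisfies $\EMf{S}{G'} \ge (1 - q'^{\lfloor \epsilon k' \rfloor} - \epsilon) k'$, where $q' = \max\{1 - p_e : e \in E(G')\}$. Since $S$ only queries edges of $G'$, it is a fortiori a valid strategy for $G$, and the random matching it produces is identical in both instances, so $\EMf{S}{G} = \EMf{S}{G'}$. To obtain the bound in the statement, I then need to translate $q'$ and $\lfloor \epsilon k' \rfloor$ into the quantities appearing in Lemma \ref{Scomplete}.

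For the weight parameter: since $E(G') \subseteq E(G)$, we have $q' \le q$, and because both lie in $[0,1)$ this gives $q'^m \le q^m$ for every nonnegative integer $m$, so $1 - q'^m \ge 1 - q^m$. For the exponent: the inequality $k' = \lfloor k/2 \rfloor \ge (k-1)/2$ yields $\lfloor \epsilon k' \rfloor \ge \lfloor \epsilon(k-1)/2 \rfloor$, and since $q \le 1$ this implies $q^{\lfloor \epsilon k' \rfloor} \le q^{\lfloor \epsilon(k-1)/2 \rfloor}$. Chaining these two monotonicity observations gives $\EMf{S}{G} \ge (1 - q^{\lfloor \epsilon(k-1)/2\rfloor} - \epsilon) \lfloor k/2 \rfloor$, which is precisely the desired inequality.

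I do not anticipate a real obstacle here: the argument is essentially a reduction together with two elementary monotonicity checks. The only minor point to be careful about is the arithmetic showing $\lfloor \epsilon \lfloor k/2 \rfloor \rfloor \ge \lfloor \epsilon(k-1)/2 \rfloor$ (which holds since $\epsilon \lfloor k/2 \rfloor \ge \epsilon(k-1)/2$, so the same inequality survives taking floors) and the boundary behaviour when $q = 0$ or $q = 1$, both of which are trivially consistent with the stated bound.
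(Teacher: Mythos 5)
Your proof is correct and follows exactly the route the paper itself indicates: apply Lemma \ref{Sbipartite} to a complete bipartite subgraph of $G$ with vertex classes of size $\lfloor k/2 \rfloor$, then transfer the bound via the monotonicity observations $q' \le q$ and $\lfloor \epsilon \lfloor k/2 \rfloor \rfloor \ge \lfloor \epsilon (k-1)/2 \rfloor$. The bookkeeping you spell out is exactly what the paper leaves implicit.
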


		Let $S'_{i,j}$ be the strategy given by Lemma \ref{Sbipartite} for $G'_{i,j}$ and $S_i$ be the strategy given by Lemma \ref{Scomplete} for $G_{i,i}$. Since the graphs $G'_{i,j}$'s and $G_{i,i}$'s are disjoint, we can apply the strategies $S'_{i,j}$'s and $S_i$'s sequentially and obtain a strategy $S$ for $G$ such that $$\EMf{S}{G} = \frac{1}{2} \sum_{i \neq j} \EMf{S'_{i,j}}{G'_{i,j}} + \sum_i \EMf{S_i}{G_{i,i}},$$ where the factor $1/2$ in the right hand side appears because the graph $G_{i,j} = G_{j,i}$ is counted twice. Defining $q = \max \{1 - p_e : e \in E(G)\}$ and employing the bounds from Lemmas \ref{Sbipartite} and \ref{Scomplete}, we have that for every $\epsilon > 0$
		\begin{equation*}
			\EMf{S}{G} \ge \frac{1}{2} \sum_{i\neq j} \left(1 - q^{ \left\lfloor \epsilon \frac{v(G'_{i,j})}{2} \right\rfloor} - \epsilon \right) \frac{v(G'_{i,j})}{2} + \sum_i \left(1 - q^{\left\lfloor \epsilon \frac{v(G_{i,i}) - 1}{2} \right\rfloor} - \epsilon \right) \left \lfloor \frac{v(G_{i,i})}{2} \right \rfloor.
		\end{equation*}
	
		Since $G \in \mathcal{G}^n_\alpha$, Fact \ref{sizeBloodTypes} gives that $v(G_{i,i}) = v_{i,j}(G) \ge (1- \alpha) \E{v_{i,j}} = \Omega(n)$ and $v(G'_{i,j}) \ge 2(1 - \alpha) \E{v_{i,j}} = \Omega(n)$. This gives the following asymptotic bound on the quality of strategy $S$ (assuming $n$ large enough):
		\begin{align*}
			\EMf{S}{G} & \ge \left(1 - q^{\epsilon \Omega(n)} - \epsilon \right) \left[ \sum_{i \neq j} \frac{v(G'_{i,j})}{4} + \sum_i \left \lfloor \frac{v(G_{i,i})}{2} \right \rfloor \right] \\
			& \ge \left(1 - q^{\epsilon \Omega(n)} - \epsilon \right) \left[ \sum_{i \neq j} \frac{(1 - \alpha) \E{v_{i,j}}}{2} + \sum_i \frac{(1 - \alpha) \E{v_{i,i}}}{2} - 4 \right] \\
			& = \left(1 - q^{\epsilon \Omega(n)} - \epsilon \right) \left[ \frac{(1 - \alpha) n}{2} - 4 \right] \ge \left(1 - c_1^{\epsilon \Omega(n)} - \epsilon \right) \left[ \frac{(1 - \alpha) n}{2} - 4 \right],
		\end{align*}
		where $c_1$ is satisfies \eqref{LBprob}. Since $c_1 < 1$ it follows that $\lim_{n \rightarrow \infty} \EMf{S}{G} / (n/2) = (1 - \epsilon) (1 - \alpha)$ and thus $S$ matches almost all nodes of the typical graph $G$ for sufficiently large $n$. 
		
		Now we argue that most graphs in $\mathcal{D}^n$ belong to the family $\mathcal{G}^n_{\alpha}$. That is, we want to lower bound the probability that a random graph $G \sim \mathcal{D}^n$ has values $v_{i,j}(G)$'s concentrated around the expectation. 
		
		Consider a random graph $G \sim \mathcal{D}^n$. Since the blood type of each node of $G$ is chosen independently, $v_{i,j}(G)$ is  distributed according to a binomial process of $n$ trials. Therefore, using Hoeffding's inequality \cite{BoucheronLB03} we get that the probability that $v_{i,j}(G)$ lies outside the interval $[(1 - \alpha) \E{v_{i,j}} ,(1 + \alpha) \E{v_{i,j}}]$ is at most $2 \exp(- 2 n^3 (c_2 \alpha)^2)$, where $c_2$ is the constant in \eqref{LPBloodType}. Since there are only 4 blood types, we can employ the union bound to estimating the probability that every $v_{i,j}(G)$ is close to its expected value. With this we obtain that the probability that the generated compatibility graph belongs to $\mathcal{G}^n_{\alpha}$ is at least $1 - 32 \exp(-2 n^3 (c_2 \alpha)^2)$, which goes to 1 exponentially fast with respect to $n$.
		
		Combining the fact that there is a good strategy for graphs in $\mathcal{G}^n_{\alpha}$ with the fact that most graphs of $\mathcal{D}^n$ belongs to $\mathcal{G}^n$ gives the desired result. 
		
		\begin{theo} \label{thmKidney}
			For any $\epsilon > 0$ there is an $n_0(\epsilon)$ such that the following holds for every $n \ge n_0(\epsilon)$. Consider a random compatibility graph $G \sim \mathcal{D}^n$. Then with probability $1 - \epsilon$ over the distribution of $G$ there is a polynomial-time computable strategy $S$ which achieves $\EMf{S}{G} \ge (1 - \epsilon) n/2$. 
		\end{theo}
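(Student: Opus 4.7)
My plan is to combine the two main ingredients developed in the section: (i) the explicit strategy $S$ built by stitching together the per-blood-type strategies from Lemma \ref{Sbipartite} and Lemma \ref{Scomplete}, together with the asymptotic lower bound on $\EMf{S}{G}$ already derived for graphs $G \in \mathcal{G}^n_\alpha$; and (ii) the concentration estimate showing that $\Pr_{G \sim \mathcal{D}^n}[G \in \mathcal{G}^n_\alpha]$ is at least $1 - 32\exp(-2 n^3 (c_2 \alpha)^2)$. The theorem follows by choosing the parameters $\alpha$ and the auxiliary $\epsilon'$ used inside Lemmas \ref{Sbipartite} and \ref{Scomplete} small enough (as functions of the target $\epsilon$), then pushing $n$ past a threshold that kills both the geometric error term $c_1^{\epsilon' \Omega(n)}$ and the concentration failure probability.

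Concretely, given $\epsilon > 0$, I would first pick $\alpha = \alpha(\epsilon) \in (0,1)$ and $\epsilon' = \epsilon'(\epsilon) \in (0,1)$ satisfying $(1-\epsilon')(1-\alpha) \ge 1 - \epsilon/2$; for instance $\alpha = \epsilon' = \epsilon/8$ suffices. Then I would invoke the computation already carried out in the preceding paragraphs of the section, which shows that for every $G \in \mathcal{G}^n_\alpha$ the composite strategy $S$ satisfies
\begin{equation*}
\EMf{S}{G} \;\ge\; \left(1 - c_1^{\epsilon' \Omega(n)} - \epsilon'\right)\left[\frac{(1-\alpha) n}{2} - 4\right].
\end{equation*}
Since $c_1 < 1$ (by the structural bound \eqref{LBprob}), the term $c_1^{\epsilon' \Omega(n)}$ decays to $0$, so there is a threshold $n_1(\epsilon)$ above which this lower bound exceeds $(1-\epsilon)\, n/2$; here I would absorb the additive $-4$ and the factor $(1 - c_1^{\epsilon' \Omega(n)})$ into the slack left by the choice $(1-\epsilon')(1-\alpha) \ge 1 - \epsilon/2$.

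Next, for the probabilistic side, I would use the Hoeffding bound recalled in the preceding paragraph: $\Pr[G \notin \mathcal{G}^n_\alpha] \le 32 \exp(-2 n^3 (c_2 \alpha)^2)$, which is at most $\epsilon$ for all $n$ larger than some $n_2(\epsilon)$. Setting $n_0(\epsilon) = \max\{n_1(\epsilon), n_2(\epsilon)\}$, every $n \ge n_0(\epsilon)$ satisfies both (a) $G \in \mathcal{G}^n_\alpha$ with probability at least $1 - \epsilon$, and (b) conditional on this event, the polynomial-time computable strategy $S$ (obtained by running $S_i$ on each clique $G_{i,i}$ and $S'_{i,j}$ on each balanced bipartite subgraph $G'_{i,j}$, in arbitrary order, since these subgraphs are vertex-disjoint) achieves $\EMf{S}{G} \ge (1 - \epsilon) n/2$. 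Note that the strategies from Lemmas \ref{Sbipartite} and \ref{Scomplete} are polynomial-time computable on their respective pieces, and since the decomposition into blood-type classes takes linear time, $S$ itself is polynomial-time computable.

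The only delicate point, and really the one requiring care, is ensuring that all the error terms — the $\alpha$ slack from the $\mathcal{G}^n_\alpha$ restriction, the $\epsilon'$ slack from Lemmas \ref{Sbipartite} and \ref{Scomplete}, the geometric residual $c_1^{\epsilon' \Omega(n)}$, and the additive $-4$ — aggregate cleanly into the single target factor $(1-\epsilon)$; this is routine but must be done in the right order, first fixing $\alpha$ and $\epsilon'$ (depending only on $\epsilon$) and only afterwards choosing $n$ large enough. No new technical obstacle arises beyond the tools already assembled in this section.
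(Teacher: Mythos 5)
Your proposal is correct and follows essentially the same route as the paper: combine the lower bound $\EMf{S}{G} \ge (1 - c_1^{\epsilon' \Omega(n)} - \epsilon')[(1-\alpha)n/2 - 4]$ for $G \in \mathcal{G}^n_\alpha$ with the Hoeffding-based bound $\Pr[G \notin \mathcal{G}^n_\alpha] \le 32\exp(-2n^3(c_2\alpha)^2)$, fixing $\alpha$ and $\epsilon'$ as functions of the target $\epsilon$ before letting $n$ grow. Your explicit parameter choice and the remark on polynomial-time computability only make concrete what the paper leaves implicit.
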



	\section{Computational results} \label{experimental}
		
		In this section we present an experimental evaluation of the performance of simple querying strategies. During our tests, we decided to focus on the application of the query-commit problem to kidney exchanges and therefore all weighted graphs used in the tests were generated randomly according to the procedure described in Section \ref{kidneyModel}. The results show that practical heuristics perform surprisingly well, \emph{even when compared to optimal non-committting strategies}. As a preparation to our experimental results, we first address the issue of estimating the value of a strategy and estimating an upper bound on $OPT$.
		
		\subsection{Estimating the value of a strategy} \label{value}
		
		 Section \ref{optSparse} already indicated that it is not a trivial task to calculate this value exactly. Given a weighted graph $G$ and a strategy $S$, the direct way of calculating $\EMf{S}{G}$ involves finding $\Mf{S}{G}(\sigma)$ for every scenario $\sigma \sim G$. However, there is often an exponential (in $e(G)$) number of such scenarios. When $S$ is given as a decision tree $T^S$ we can compute $\EMf{S}{G}$ using the recurrence \eqref{ESrec}; but again this procedure takes time proportional to the number of nodes in $T^S$, which can be exponential in $e(G)$ and is therefore also impractical. 
		
		 Despite previous studies on the query-commit and related problems \cite{chen,goemansKnapsack, goemansIP}, there is currently no efficient algorithm to compute the expected value of a strategy and most results rely on an estimation of this value by sampling a subset of the scenarios. Our goal in this section is to address how well we can estimate $\EMf{S}{G}$ by sampling from $G$ and establish the accuracy as function of the number of samples. 
		
		Given a weighted graph $G$ on $n$ nodes and a strategy $S$, the natural way to obtain an unbiased estimation of $\EMf{S}{G}$ is to sample independently $k$ scenarios $\sigma_1, \ldots, \sigma_k \sim G$ and take $\eta = (1/k) \sum_{i = 1}^k |\Mf{S}{G}(\sigma_i)|$ as the estimate. Clearly $\E{\eta} = \EMf{S}{G}$. Moreover, since $|\Mf{S}{G}(\sigma_i)| \in [0, n/2]$ for all $i$ we also have that $\eta \in [0, n/2]$. In order to show that $\eta$ is concentrated around the expectation we can simply employ Hoeffding's inequality to obtain that for every $t \ge 0$
		 \begin{equation}
		 	Pr\left(|\eta - \EMf{S}{G}| \ge t\right) \le 2 \exp\left( -\frac{8 k t^2}{n^2} \right). \label{conc1}
		 \end{equation}
	According to this expression, we need approximately $0.4611 n^2 / t^2$ samples in order to obtain a 95\% confidence interval equal to $\EMf{S}{G} \pm t$. 
	
	Notice that the previous bound does not use much of the structure of the matchings. In particular, \eqref{conc1} relies solely on the fact that the size of the matchings lie in $[0, n/2]$. However, this simple concentration estimate is essentially best possible. This holds because we can construct a graph and a strategy $S$ for it which obtains a small matching half of the time and a large matching half of the time. The variance on the size of the matching obtained by $S$ is then essentially as large as possible when compared to any random variable taking values in $[0, n/2]$; in such case, Hoeffding's inequality is rather tight. More formally, we have that following lemma.
	
	\begin{lemma} \label{tightSampling}
		There is a graph $G$ on $n$ nodes and a strategy $S$ for querying $G$ such that
		\begin{equation*}
			Pr\left(\left|\eta - \EMf{S}{G}\right| \ge t\right) \ge \frac{2}{15} \exp\left( -\frac{64 k t^2}{n^2} \right)
		\end{equation*}
		for all $t \le n/16$.
	\end{lemma}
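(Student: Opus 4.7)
The plan is to exhibit an extremal instance in which $|\Mf{S}{G}|$ is highly bimodal. Assume $n$ is even and set $m = (n-2)/2$. Take $G$ on the vertex set $\{a, b, c_1, \ldots, c_m, d_1, \ldots, d_m\}$ with a single ``trigger'' edge $(a,b)$ of probability $1/2$, together with the edges $(a, c_i)$, $(b, d_i)$, $(c_i, d_i)$ each of probability $1$, for $i = 1, \ldots, m$. Define the strategy $S$ to first query $(a,b)$; on success, $S$ queries $(c_1, d_1), \ldots, (c_m, d_m)$ in order (each succeeds deterministically since its probability is $1$), producing a matching of size $1 + m = n/2$; on failure, the decision tree terminates at a leaf, producing the empty matching. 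Thus $|\Mf{S}{G}(\sigma)| = n/2$ precisely when $(a,b) \in \sigma$ and $|\Mf{S}{G}(\sigma)| = 0$ otherwise, so $|\Mf{S}{G}|$ is a $\{0, n/2\}$-valued random variable with each value attained with probability exactly $1/2$.

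Let $Y = |\{i \le k : |\Mf{S}{G}(\sigma_i)| = n/2\}|$, so $Y \sim \textrm{Bin}(k, 1/2)$, $\eta = (n/(2k)) Y$, and $\Ep{\eta} = n/4 = \EMf{S}{G}$. Substitution shows that the event $\{|\eta - \EMf{S}{G}| \ge t\}$ coincides with $\{|Y - k/2| \ge 2kt/n\}$. Writing $s = 2kt/n$, the hypothesis $t \le n/16$ becomes $s \le k/8$, and the lemma reduces to the binomial anti-concentration bound
\begin{equation*}
Pr\!\left(|Y - k/2| \ge s\right) \;\ge\; \frac{2}{15} \exp\!\left(-\frac{16 s^2}{k}\right) \quad \text{for all } s \le k/8,
\end{equation*}
since substituting $s = 2kt/n$ back yields precisely the claimed $\exp(-64 k t^2/n^2)$.

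This inequality is standard and can be derived as follows: by symmetry of $\textrm{Bin}(k, 1/2)$ around $k/2$ it suffices to lower bound $Pr(Y \ge k/2 + s)$ by $\frac{1}{15}\exp(-16 s^2/k)$. A direct Stirling estimate gives $\binom{k}{k/2+s}/2^k \ge c\, k^{-1/2} \exp(-4 s^2/k)$ for $s \le k/8$; in the regime $s \ge \sqrt{k}$, summing $\Omega(s)$ consecutive terms absorbs the $k^{-1/2}$ prefactor, and in the regime $s \le \sqrt{k}$ the tail $Pr(Y \ge k/2 + s)$ is bounded below by a fixed positive constant (using that $Y$ has median $k/2$) that comfortably exceeds $1/15$. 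The wide gap between the sharp Gaussian exponent $2 s^2/k$ and the looser exponent $16 s^2/k$ in the target bound absorbs all polynomial and Berry--Esseen-type corrections, so any reasonable route to a binomial anti-concentration estimate succeeds.

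The conceptual core of the argument, and the real reason the obstacle is mild, is that the construction drives the variance of $|\Mf{S}{G}|$ to $n^2/16$, which is precisely the maximum variance of any $[0, n/2]$-valued random variable. Since Hoeffding's inequality \eqref{conc1} uses only the support bound, once a random variable saturates this variance-to-range ratio its upper bound cannot be improved except in absolute constants; this is why the exponent $\Theta(k t^2/n^2)$ in \eqref{conc1} is necessarily correct, and the specific constants $2/15$ and $64$ in the lemma are simply convenient round values that drop out of the binomial tail estimate above.
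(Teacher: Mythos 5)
Your proposal is correct and follows essentially the same route as the paper: engineer a strategy whose matching size is a fair two-point distribution with spread $\Theta(n)$, so that $\eta$ is an affine image of a $\mathrm{Bin}(k,1/2)$ variable, then invoke the binomial anti-concentration bound $Pr(B(k,1/2)\ge k/2+s)\ge \frac{1}{15}\exp(-16s^2/k)$ for $s\le k/8$ and symmetrize to get the factor $2/15$. The differences are cosmetic but worth noting. The paper's gadget is $n/4$ disjoint $4$-node paths with a single probability-$1/2$ middle edge steering the matching to size $n/4$ or $n/2$, and it simply cites the binomial lower-tail estimate (from Matou\v{s}ek--Vondr\'{a}k) rather than proving it; your trigger-edge gadget pushes the spread to $\{0,n/2\}$, saturating the variance bound $n^2/16$, and in fact your scaling ($s=2kt/n$, so $t\le n/16$ gives $s\le k/8$ and exponent $64kt^2/n^2$) reproduces the lemma's constants exactly. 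Note that your construction relies on the strategy being permitted to halt after a failed query, which the paper's decision-tree formalism does allow. One small inaccuracy in your sketch of the binomial bound: in the regime $s\le\sqrt{k}$ the tail $Pr(Y\ge k/2+s)$ is \emph{not} bounded below by a constant exceeding $1/15$ (at $s=\sqrt{k}$, i.e.\ two standard deviations, it is roughly $0.023$); what saves the inequality there is that the target $\frac{1}{15}\exp(-16s^2/k)$ is itself far below that tail probability throughout $1\le s\le\sqrt{k}$, so the conclusion stands even though the stated justification is off. Since the paper outsources this bound to a citation, your argument is, modulo that one phrase, a self-contained version of the same proof.
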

			
	\subsection{Upper bound on $OPT$} \label{secUB}
		
		Clearly the size of a maximum cardinality matching in $G$ is an upper bound on $OPT(G)$, since the maximum matching in any realization of $G$ has size at most $\mu(G)$.
		 However, this bound can be made arbitrarily weak by considering small edge weights, e.g. $G$ is a set of disjoint edges, each with weight $p$, so $\mu(G) = e(G)$ but $OPT(G) = p \cdot e(G)$. 
		
		Notice that actually $OPT$ can be upper bounded by the expected size of the maximum matching over the realizations of $G$; that is, for $\sigma \sim G$ we have $OPT(G) \le \E{\mu(\sigma)} \doteq \E{\mu}$. This bound is tighter than $\mu(G)$ and, as Section \ref{exp} supports, is oftentimes very close to $OPT(G)$. An important remark is that $\E{\mu}$ is a valid upper even for the \emph{non-commit} or \emph{clairvoyant} version of the problem, where the strategy can first find out exactly the edges in the realization and then decide which matching to take. Again we encounter the issue of calculating or at least estimating $\E{\mu}$. 
		
		\paragraph{Estimating $\E{\mu}$.} Clearly the sample average estimator used in the previous section can also be used to estimate $\E{\mu}$ and the Hoeffding-based bound still holds. However, we can get substantially better concentration results by bounding the variance of the estimate more carefully. We make use of this tighter concentration to reduce the computational effort to estimate the upper bound on $OPT$.
		
		Again consider a weighted graph $G$ on $n$ nodes and $m$ edges and consider $k$ independent scenarios $\sigma_1, \ldots, \sigma_k \sim G$. We take $\eta = (1/k) \sum_{i = 1}^k \mu(\sigma_i)$ as the estimation of $\E{\mu}$, since clearly $\E{\eta} = \E{\mu}$. Our goal now is to bound the variance of $\sigma_i$, which will then be used to provide tighter concentration results for $\eta$. For that we need to introduce the concept of self-bounding functions. 
		
		A nonnegative function $g: \mathcal{X}^d \rightarrow \mathbb{R}$ is \emph{self-bounding} if there exist functions $g_i : \mathcal{X}^{d - 1} \rightarrow \mathbb{R}$ such that for all $x_1, \ldots, x_d \in \mathcal{X}$ the following hold:
		\begin{equation*}
			0 \le g(x_1, \ldots, x_d) - g_i(x_1, \ldots, x_{i-1}, x_{i+1}, \ldots, x_d) \le 1 \ \ \ \textrm{for all } i = 1, \ldots, d
		\end{equation*}
		and
		\begin{equation*}
			\sum_{i = 1}^d \left[g(x_1, \ldots, x_d) - g_i(x_1, \ldots, x_{i-1}, x_{i+1}, \ldots, x_d) \right] \le g(x_1, \ldots, x_d)
		\end{equation*}

		The following lemma motivates the definition of self-bounding functions.
		
		\begin{lemma}[\cite{BoucheronLB03}] \label{selfBoundingVar}
			Suppose $g: \mathcal{X}^d \rightarrow \mathbb{R}$ is a measurable self-bounding function. Let $X_1, \ldots, X_d$ be independent random variables with support on $\mathcal{X}$ and let $Z = g(X_1, \ldots, X_d)$. Then $$\V {Z} \le \E{Z}.$$
		\end{lemma}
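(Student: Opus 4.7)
The plan is to derive the bound from the Efron--Stein inequality together with the two defining conditions of self-bounding functions. Recall that the Efron--Stein inequality (in its ``leave-one-out'' version) states that for independent random variables $X_1, \ldots, X_d$, any random variable $Z = g(X_1, \ldots, X_d)$, and any measurable functions $Z_i = h_i(X_1, \ldots, X_{i-1}, X_{i+1}, \ldots, X_d)$ not depending on $X_i$, one has
\begin{equation*}
    \V{Z} \le \sum_{i=1}^d \E{(Z - Z_i)^2}.
\end{equation*}
The natural choice here is to take $Z_i = g_i(X_1, \ldots, X_{i-1}, X_{i+1}, \ldots, X_d)$, where $g_i$ is the function given by the self-bounding definition.

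First I would substitute this choice of $Z_i$ into Efron--Stein. The first self-bounding condition guarantees $0 \le Z - Z_i \le 1$ pointwise, which implies the pointwise inequality $(Z - Z_i)^2 \le Z - Z_i$. Summing over $i$ and taking expectations yields
\begin{equation*}
    \V{Z} \;\le\; \sum_{i=1}^d \E{(Z - Z_i)^2} \;\le\; \sum_{i=1}^d \E{Z - Z_i} \;=\; \E{\sum_{i=1}^d (Z - Z_i)}.
\end{equation*}
Now I invoke the second self-bounding condition, which states precisely that the inner sum is bounded pointwise by $Z$. Taking expectation of both sides gives $\E{\sum_i (Z - Z_i)} \le \E{Z}$, and combining this with the previous display completes the proof.

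The only step requiring care is the invocation of Efron--Stein with an arbitrary leave-one-out surrogate $Z_i$ rather than the conditional expectation $\E_i Z$; this variant follows because the conditional expectation is the $L^2$-minimizer among such surrogates, so replacing $\E_i Z$ by any $Z_i$ only enlarges the right-hand side. Since the result is attributed to \cite{BoucheronLB03}, I would simply cite the Efron--Stein inequality rather than reprove it, so the main technical obstacle---and in fact the only nontrivial step---is the clean algebraic manipulation $(Z - Z_i)^2 \le Z - Z_i$ enabled by condition~(i) of the self-bounding definition, which turns an $L^2$-bound into an $L^1$-bound amenable to condition~(ii).
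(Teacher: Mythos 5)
Your proof is correct: the leave-one-out Efron--Stein bound with $Z_i = g_i(X_1,\ldots,X_{i-1},X_{i+1},\ldots,X_d)$, the pointwise inequality $(Z-Z_i)^2 \le Z-Z_i$ from condition (i), and the summation bound from condition (ii) combine exactly as you describe, and your justification for replacing the conditional expectation by an arbitrary surrogate (it is the $L^2$-minimizer) is the right one. The paper does not prove this lemma at all --- it imports it directly from \cite{BoucheronLB03} --- and your argument is precisely the standard proof given in that reference, so there is nothing to reconcile.
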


		The connection between the previous lemma and our goal of estimating $\V{\mu(\sigma_i)}$ comes from the fact that $\sigma_i$ can be seen as $e(G)$ independent indicator random variables for the edges of $G$ and $\mu(\sigma_i)$ can be see as a self-bounding function. To make this precise let $e_1, e_2, \ldots, e_m$ be the edges of $G$ and let $X_1, X_2, \ldots, X_m$ be independent Bernoulli random variables with $Pr(X_i = 1) = p_{e_i}$. Also, for an indicator vector $x \in \{0,1\}^m$ of the edges of $G$, let $\mu'(x)$ be the size of the maximum matching in the subgraph of $G$ induced by $x$ (i.e. which contains the edge $e_i$ iff $x_i = 1$). It is easy to see that $\mu'(X_1, X_2, \ldots, X_m) = \mu(\sigma_i)$ and the next lemma asserts that $\mu'(.)$ is self-bounding.
		
		\begin{lemma} \label{selfBounding}
			The function $\mu'(.)$ is self-bounding.
		\end{lemma}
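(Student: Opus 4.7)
The plan is to take the obvious candidate $g_i(x_{-i}) := \mu'(x_1,\ldots,x_{i-1},0,x_{i+1},\ldots,x_m)$, i.e., the size of a maximum matching in the subgraph one obtains from $x$ after additionally removing edge $e_i$. Write $H(x)$ for the subgraph of $G$ induced by $x$, so $g_i(x_{-i}) = \mu(H(x)\setminus e_i)$ whenever $x_i = 1$, and $g_i(x_{-i}) = \mu'(x)$ trivially when $x_i = 0$.

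For the pointwise bound $0 \le \mu'(x) - g_i(x_{-i}) \le 1$, the lower inequality is immediate since $H(x)\setminus e_i$ is a subgraph of $H(x)$. For the upper inequality, I would pick any maximum matching $M$ of $H(x)$: either $e_i \notin M$, in which case $M$ is still a matching of $H(x) \setminus e_i$, or $e_i \in M$, in which case $M \setminus \{e_i\}$ is a matching of $H(x) \setminus e_i$ of size $|M|-1$. Either way $g_i(x_{-i}) \ge \mu'(x) - 1$.

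For the sum condition, the key observation I would establish is that $\mu'(x) - g_i(x_{-i}) \in \{0,1\}$ and equals $1$ precisely when $x_i = 1$ and $e_i$ lies in \emph{every} maximum matching of $H(x)$. One direction: if some maximum matching $M^*$ of $H(x)$ avoids $e_i$, then $M^*$ remains a matching in $H(x) \setminus e_i$, giving $g_i(x_{-i}) = \mu'(x)$. Conversely, if every maximum matching of $H(x)$ contains $e_i$, then no matching of $H(x) \setminus e_i$ can attain $\mu'(x)$ edges, so the difference is exactly one. Now fix any maximum matching $M^\star$ of $H(x)$; every edge contributing to the sum lies in every maximum matching and in particular in $M^\star$, so
\[
\sum_{i=1}^m \bigl[\mu'(x) - g_i(x_{-i})\bigr] \;\le\; |M^\star| \;=\; \mu'(x),
\]
which is exactly the self-bounding inequality.

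The only point that requires any real care is the equivalence ``$e_i$ belongs to every maximum matching $\iff$ removing $e_i$ strictly decreases the matching number,'' but this is essentially immediate from the two matching-surgery arguments above, so I do not anticipate any genuine obstacle.
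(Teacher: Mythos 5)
Your proposal is correct and follows essentially the same route as the paper: the same choice of $g_i$ (maximum matching size after deleting $e_i$), the same matching-surgery argument for the pointwise bound, and the same charging of the indices with $\mu'(x)-g_i(x_{-i})=1$ to the edges of a fixed maximum matching $M^\star$ to get the sum bound $\le |M^\star| = \mu'(x)$. The only difference is cosmetic: you characterize the contributing indices as those $e_i$ lying in \emph{every} maximum matching, whereas the paper only needs (and only states) that such an $e_i$ must lie in the one fixed maximum matching being charged against.
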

		
		Since $\mu(\sigma_i) = \mu'(X_1, \ldots, X_m)$, Lemma \ref{selfBoundingVar} gives the desired bound $\V{\mu(\sigma_i)} \le \E{\mu(\sigma_i)} \le n/2$. Now that we have a handle on this variance, we can evoke Bernstein's inequality \cite{BoucheronLB03} to bound the concentration of $\eta = (1/k) \sum_{i = 1}^k \mu(\sigma_i)$ as follows:
	\begin{equation} 
		Pr\left(\left|\eta - \E{\mu}\right| \ge t\right) \le 2 \exp\left( -\frac{k t^2}{n + 2t/3} \right). \label{boundEmu}
	\end{equation}
	
	When compared to \eqref{conc1}, the lack of an extra term $1/n$ in the exponent makes \eqref{boundEmu} a much stronger bound.
						 		

	\subsection{Comparison of heuristics} \label{exp}
		
		In this section we present experimental results comparing simple querying strategies over random weighted compatibility graphs. All strategies use to some extent an optimization based on Lemma \ref{pendantFirst}, that is, they query pendant edges first if one exists. The heuristics considered in the experiments are described next.
		
		\begin{description}
			\item[Maximum probability.] This strategy first queries pendant edges in the residual graph. In case none exists, it queries the edge with highest weight. 
			
			\item[Minimum probability.] Similar to the previous strategy but edges are queried by decreasing weights.
			
			\item[Minimum degree.] This strategy first queries pendant edges in the residual graph. In case none exists, it queries the edge which has minimum degree in the residual graph, where the degree of an edge is defined as the sum of the degrees of its endpoints.
			
			\item[Minimum average degree.] The average degree of an edge $(u,v)$ is defined as the sum of the weights of edges incident to $u$ plus the sum of the weights of edges incident to $v$. This strategy first queries pendant edges in the residual graph. In case none exists, it queries the edge which has minimum average degree in the residual graph.  
			
			\item[Batch successive matching.] First, this strategy queries pendant edges in the residual graph. After no more pendant edges exist, it finds a maximum cardinality matching in the residual graph. Then it queries all the edges in this matching in an arbitrary order. After all these edges are queried the process is repeated.
			
			\item[Batch successive weighted matching.] Similar to the previous strategy, but now in each round it computes a maximum \emph{weighted} matching with edge weights $1 - p$. 
			
			\item[Successive weighted matching $q$.] First, this strategy queries pendant edges in the residual graph. After no more pendant edges exist, it finds a maximum weighted matching (with edge weights $1 - p$) in the residual graph. Then it queries \emph{one} arbitrary edge in this matching. The process is then repeated.
			
			\item[Successive weighted matching $p$.] Similar to the previous strategy, but the edge weights used in the maximum weighted matchings are simply $p$.
	\end{description}
	
		The instances used in the experiments consist of random weighted compatibility graphs on 100 nodes, generated as described in Section \ref{kidneyModel}. Simulating exchange pools with 100 patient/donor pairs is optimistic but not unrealistic, as pointed out in \cite{roth3}. We also carried experiments in graphs with fewer than 100 nodes, but these graphs did not seem to be large enough to discriminate the querying strategies. 
		
		In order to estimate the value of each strategy we employed the sample average estimation discussed in Section \ref{value}. For each execution, we used 38,000 samples in order to obtain a good estimate; according to inequality \eqref{conc1}, with $0.95$ probability, the estimate is within $\pm 0.35$ of the actual expected matching obtained by the strategy. In order to obtain an estimated upper bound on $OPT$ we used the sample average of $\E{\mu}$ as described in Section \ref{secUB}. The number of samples was chosen to obtain an estimate which is within $\pm 0.1$ of $\E{\mu}$ with probability $0.95$. 
		
		The results of the experiments are presented in Table \ref{tab:heu}. The first eight columns correspond to the strategies in the same order as they were described and the last column presents the upper bound $\E{\mu}$ on $OPT$. In each row, except the last one, we present the estimated value of the strategies on a given instance. The last row of the table indicates the average value of each strategy over all instances. 		
		
\begin{table}
	\begin{center}
	\begin{tabular}{|c|c|c|c|c|c|c|c|c|}
		\hline
		maxP & minP & minDeg & minAvgDeg & batchSM & batchWSM & SWMq & SWMp & $\E{\mu}$ \\
		\hline
			
		24.91 & 26.52 & 26.83 & {\bf 28.21} & 25.81 & 26.92 & 27.68 & 26.65 & 28.71 \\
		22.43 & 23.25 & 24.07 & {\bf 24.99} & 22.69 & 23.69 & 24.41 & 23.93 & 25.51 \\
	  18.64 & 19.03 & 19.11 & {\bf 19.53} & 18.87 & 19.24 & 19.36 & 19.17 & 19.82 \\
		16.36 & 15.57 & 16.62 & {\bf 16.79} & 16.49 & 16.56 & 16.67 & 16.75 & 16.85 \\
		19.51 & 21.36 & 20.69 & {\bf 22.22} & 20.06 & 21.06 & 21.86 & 20.17 & 22.56 \\
		22.74 & 23.20 & 23.53 & {\bf 25.14} & 23.20 & 24.00 & 24.77 & 24.02 & 25.64 \\
		24.16 & 24.38 & 25.02 & {\bf 26.44} & 24.54 & 25.33 & 26.08 & 25.28 & 26.82 \\
		21.36 & 22.07 & 23.18 & {\bf 24.30} & 22.41 & 23.15 & 23.90 & 23.68 & 24.66 \\
		25.39 & 26.06 & 27.20 & {\bf 27.98} & 26.11 & 26.79 & 27.47 & 26.63 & 28.43 \\
		22.87 & 21.56 & 23.61 & {\bf 24.06} & 23.06 & 23.47 & 23.81 & 23.60 & 24.43 \\
		\hline
		21.83 & 22.30 & 22.98 & {\bf 23.97} & 22.32 & 23.02 & 23.60 & 22.99 & 24.34 \\
		\hline
	\end{tabular}
	\end{center}
	\caption{Comparison of different strategies. Each row corresponds to an instance, except the last one which reports the average value of each the strategy over all instances.}
	\label{tab:heu}
\end{table}
		
	Table \ref{tab:heu} shows that all these simple strategies perform very well. Surprisingly, these heuristics are actually close to optimal \emph{clairvoyant} strategies, since $\E{\mu}$ is a valid upper bound for the \emph{non-commit} version of this problem as well. These results indicate that, for the kidney exchange application, the commit requirement in the formulation of the problem is not too restrictive, in that good solutions are still obtainable under this constraint. Notice that strategy {\bf minAvgDeg} in particular outperforms all others in \emph{every} instance of the test set. Moreover, its value stays always relatively close to the upper bound $\E{\mu}$, being on average within $1.5\%$.
		
		 
		A caveat to these results is that the confidence interval of $\pm 0.35$ for the value of the strategies does not allow complete discrimination among the best performing heuristics. The confidence intervals obtained from \eqref{conc1}, due to its generality, seems to be much looser than the actual bounds on the estimates. This is reinforced by Figure \ref{fig:conv}, which displays the sample average as a function of the number of samples; notice that with 2,500 samples the estimate already starts oscillating closely around the reported value of 28.21. We remark that a similar convergence profile holds for the other strategies tested.

\begin{figure}
	\centering
		\includegraphics[scale=0.45]{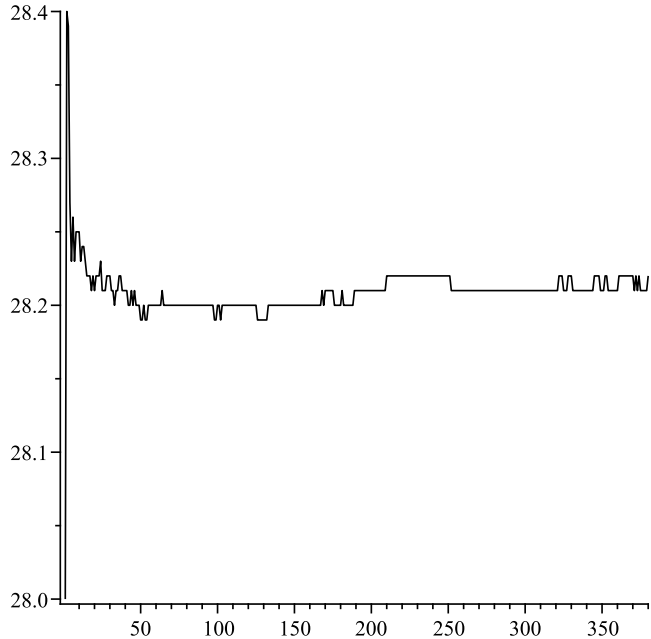}
	\caption{Sample average estimates of the value of {\bf minAvgDeg} on the first instance of the test set as a function of the number of samples. The horizontal axis indicates the number of samples divided by 100.}
	\label{fig:conv}
\end{figure}

		\section{Conclusions and future work} \label{conclusion}
		
		In this paper we considered the query-commit problem, a model for matchings which incorporates uncertainty on the edges of the graph in a way that is suitable for time-sensitive applications. By using the fact that some edges of the graph can be queried without compromising optimality, we show how to obtain an optimal querying strategy for sparse graphs in polynomial time. However, the dependency on the sparsity of the graph is doubly exponential. An interesting open question is to improve this running time, which may indirectly reveal other important properties of optimal strategies. On a similar note, another open question is to prove lower bounds on the computational complexity needed for finding optimal strategies in general graphs. 
	
		In Section \ref{experimental} we evaluated querying strategies over instances of the kidney exchange application and showed that even simple heuristics perform surprisingly well, even when compared to non-committing strategies. An important open question is designing a procedure which is able to provide even better strategies, possibly by starting with a heuristic and successively improving it (e.g. in a local search fashion). However, two hindrances for such procedures are the lack of an algorithm to compute the value of a strategy and the difficulty in representing strategies. As noted previously, the size of a decision tree may be exponential in the size of the input graph. 
	
		Another possibility is to extend the current model to even more realistic setups. For instance, one could consider correlated uncertainty on the edges. In the context of kidney exchanges, the uncertainty on the PRA level of a node introduces correlated uncertainty on all edges incident to it. In addition, recent works in deterministic kidney matchings have considered not only 2-way exchanges but also longer chains of exchanges \cite{david, roth3}, yielding additional transplants. A direction for future research is to study a suitable modification of the query-commit problem which can model uncertainty on longer exchanges. 
		
		Finally, Theorem \ref{thmKidney} indicates the potential of large kidney exchange programs. It would be of great value to obtain a more precise assessment of this potential and to address the logistic problems associated to nationwide transplant programs. 
		
	\paragraph{Acknowledgments.} We thank Tuomas Sandholm, Willem-Jan van Hoeve and David Abraham for helpful discussions.
			

\bibliography{query_commit}
\bibliographystyle{plain}


	\appendix

	\section{Proofs for Section 3}
	
	\subsection{Proof of Lemma \pendantFirst}
	
		Let $S^*$ be an optimal strategy for $G$ and consider the strategy $S$ which first queries $e$ and then proceeds exactly as $S^*$. We clarify what happens in the following situation: the realization $\sigma$ contains both $e$ and another edge $e'$ incident to $e$, and $\Mf{S^*}{G}(\sigma)$ contains $e'$. Clearly $S$ adds $e$ to the matching in the first step and cannot add $e'$ while simulating $S^*$; in this case, $S$ still probes $e'$ whenever $S^*$ does so and adapts accordingly, although $e'$ is not added to the matching. 
		
		We claim that $S$ is optimal, which proves the lemma. The definition of $S$ leads to the following observations. If $e$ does not belong to the realization $\sigma$ then $\Mf{S}{G}(\sigma) = \Mf{S^*}{G}(\sigma)$. On the other hand, if $e$ belongs to $\sigma$ then: (i) the $k$th edge queried by $S$ is exactly the $(k - 1)$th edge queried by $S^*$ (the minus 1 comes from the fact that $S$ queries $e$ before simulating $S^*$) and (ii) every edge added by $S^*$ to the matching $\Mf{S^*}{G}(\sigma)$ is also added to the matching $\Mf{S}{G}(\sigma)$, the only possible exception being a single edge $e'$ of $\Mf{S^*}{G}(\sigma)$ incident to $e$; since $e$ is pendant, $\Mf{S^*}{G}(\sigma)$ contains at most one edge incident to $e$. In the worst case $\Mf{S}{G}(\sigma)$ contains the edges in $\Mf{S^*}{G}(\sigma) \cup \{e\} \setminus \{e'\}$ and we still have $|\Mf{S}{G}(\sigma)| \ge |\Mf{S^*}{G}(\sigma)|$. 
		
		Together, these observations imply that $\EMf{S}{G} \ge \EMf{S^*}{G}$ and hence $S$ is optimal.
	
	
	\subsection{Proof of Lemma 2}
	
	
	
		Let $x_1, x_2, \ldots, x_k$ be the path of $T^S$ induced by its execution over the scenario $\sigma$. By definition of $S$, $a(x_i)$ is pendant in the graph $G^{x_i}$.
		
		By means of contradiction suppose that $\Mf{S}{G}(\sigma)$ is not a maximum matching in $\sigma$, namely there is a matching $M^*$ in $\sigma$ such that $|\Mf{S}{G}(\sigma)| < |M^*|$. Then from Berge's Lemma \cite{lovasz} there must be an augmenting path $P$ in $\sigma$ with respect to $\Mf{S}{G}(\sigma)$ and $M^*$.
		
		Let $a(x_i)$ be an edge of $P$ queried by $T^S$. As mentioned in Section \prelim, $\Mf{T^S}{G}(\sigma) = \{a(x_1), a(x_2), \ldots, a(x_k)\} \cap \sigma$, and since $a(x_i) \in P \subseteq \sigma$ we have that $a(x_i)$ belongs to $\Mf{T^S}{G}(\sigma)$. Then, since $P$ is augmenting, there are two edges $e, e' \in P$ which are incident to $a(x_i)$ and belong to $M^*$. But since $a(x_i)$ is a pendant edge of $G^{x_i}$, it must be that either $e$ or $e'$ does not belong to $G^{x_i}$ and without loss of generality we assume the former. The construction of $G^{x_i}$ implies that there is an edge $a(x_j)$ with $j < i$ such that $a(x_j) \in \sigma$ and $a(x_j)$ is incident to $e$. 
		
		Since $a(x_j) \in \sigma$, we know that $a(x_j)$ also belongs to $\Mf{T^S}{G}(\sigma)$. Since $\Mf{T^S}{G}(\sigma)$ is a matching, it follows that $a(x_j)$ is not incident to an internal node of $P$ and hence it must be incident to the endpoint of $e$ which is also an endpoint of $P$. However, this contradicts the fact that $P$ is an augmenting path, which completes the proof of the lemma.
	
	


	\subsection{Proof of Lemma 3} 

		For each $R \in \mathcal{R}(S(H,e), H)$ let $T^R$ be an optimal decision tree for $R$. Consider the natural CDT $T'$ for $H$ which has $S^{\ro{T'}} = S(H,e)$ and the children of $\ro{T'}$ are the trees $T^R$'s. We claim that $T'$ is optimal for $H$. To argue that, let $T$ be the decision tree corresponding to $T'$. Using the correspondence between CDT's and decision trees, it suffices to prove that $T$ is an optimal decision tree for $H$. 		
		
		We prove that $T_x$ is an optimal decision tree for $H^x$, for every node $x \in T \setminus \bigcup_R T^R$; this is done by reverse induction on the depth of $x$ in $T$. The fact that the trees $T^R$ are optimal removes the necessity of a separate base case, so consider a node $x \in T \setminus \bigcup_R T^R$ and assume that $T_{r(x)}$ is optimal for $H^{r(x)}$ and $T_{l(x)}$ is optimal for $H^{l(x)}$. By the definition of $S$ we have that $a(x)$ is pendant in $H^x$. Therefore, Lemma \pendantFirst asserts that there is an optimal decision tree for $H^x$ whose root queries edge $a(x)$. Then it is easy to see that the optimality of $T_{r(x)}$ and $T_{l(x)}$ implies that actually $T_x$ is one such optimal decision tree for $H^x$, which concludes the inductive step and the proof of the lemma.


	\subsection{Proof of Lemma 4}
	
		First let us relax the definition of $\mathcal{F}$ by defining $\mathcal{F}^+$ as follows. A CDT $T$ for a subgraph $H$ of $G$ belongs to $\mathcal{F}^+$ if each node $x$ of $T$ is either associated to a strategy which consists of querying one edge incident to $V(H^x) \cap V_{\ge 3}$ or it is associated to a strategy $S(H^x, e)$ for some $P_i$ and some $e \in E(H^x) \cap E(P_i)$. Notice $\mathcal{F}$ is the set of CDT's in $\mathcal{F}^+$ which have height at most $9d + 1$.
		
		\begin{claim}
			There is an optimal CDT for $G$ which belongs to $\mathcal{F}^+$.
		\end{claim}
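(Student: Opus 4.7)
The plan is to prove the slightly stronger statement by strong induction on $e(H)$: for every subgraph $H$ of $G$, there is an optimal CDT for $H$ that belongs to $\mathcal{F}^+$. Taking $H = G$ yields the claim. The base case $e(H) = 0$ is immediate, since the trivial one-node CDT is optimal and vacuously lies in $\mathcal{F}^+$.

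For the inductive step, fix $H$ with $e(H) \ge 1$ and let $T^\star$ be any optimal CDT for $H$; by expanding its root strategy if necessary, I may assume $T^\star$ queries a single edge $e^\star$ at its root. I split on whether $e^\star$ is incident to $V_{\ge 3}$. If $e^\star$ has an endpoint in $V_{\ge 3}$, the root strategy of $T^\star$ already satisfies the type~(i) condition of $\mathcal{F}^+$; its two children correspond to residual graphs $H \setminus N(e^\star)$ and $H \setminus e^\star$, each with strictly fewer edges than $H$, so the induction hypothesis supplies optimal CDTs in $\mathcal{F}^+$ for both, and plugging them in below the root yields a CDT in $\mathcal{F}^+$ whose expected value---by equation~\eqref{ESrecCDT}---equals $\EMf{T^\star}{H}$, hence is optimal. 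If instead $e^\star$ has no endpoint in $V_{\ge 3}$, both of its endpoints have degree at most $2$ in $G$, so $e^\star$ lies in some path $P_i$ of the decomposition; since $T^\star$ witnesses that there is an optimal strategy for $H$ whose first query is $e^\star$, Lemma~\ref{mainLemmaSparse} furnishes an optimal CDT $T'$ for $H$ whose root is associated to $S(H, e^\star)$, which satisfies the type~(ii) condition of $\mathcal{F}^+$. Each residual graph in $\mathcal{R}(S(H, e^\star), H)$ has strictly fewer edges than $H$ (the edge $e^\star$ is certainly removed during the execution of $S(H, e^\star)$), so the induction hypothesis applies to each of the root's children; splicing the resulting $\mathcal{F}^+$-CDTs in below the root of $T'$ completes the inductive step.

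The one nontrivial ingredient is Case~2, where the first query of $T^\star$ is an arbitrary edge on some path $P_i$ and need not, on its own, match the structure of $S(H, e^\star)$; this is precisely the situation that Lemma~\ref{mainLemmaSparse} was engineered to handle, and invoking it closes the gap. Everything else is routine bookkeeping on \eqref{ESrecCDT}, together with the observation that replacing a subtree of an optimal CDT by another optimal CDT for the same residual graph preserves optimality.
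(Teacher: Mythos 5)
Your proof is correct and follows essentially the same route as the paper's: strong induction on the number of edges of the subgraph, a case split on whether the first queried edge is incident to $V_{\ge 3}$, an appeal to Lemma \ref{mainLemmaSparse} in the path case, and replacement of the root's subtrees by inductively obtained optimal CDTs in $\mathcal{F}^+$ using equation \eqref{ESrecCDT}.
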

		
		\begin{proof}
			We prove that for each subgraph $H$ of $G$ there is an optimal CDT for $H$ in $\mathcal{F}^+$. We proceed by induction on the number of edges of the subgraph, with trivial base case for subgraphs with no edges. 
			
			Consider a subgraph $H$ of $G$ with at least one edge and let $T^*$ be an optimal querying strategy for it. Suppose that the first edge $e$ queried by $T^*$ is incident to $V_{\ge 3}$. For each $R \in \mathcal{R}(S^{\ro{T^*}}, H)$ let $T^R \in \mathcal{F}^+$ be an optimal CDT for $R$, the existence of which is given by the inductive hypothesis. Then define the decision tree $T$ for $H$ as follows: its root queries edge $e$ and the subtrees of $\ro{T}$ are the trees $\{T^R\}$. Using the recursive equation for $\EMf{T}{H}$ (equation (\ESrecCDT) in the main paper) it is easy to see that the optimality of the trees $\{T^R\}$ implies that $\EMf{T}{H} \ge \EMf{T^*}{H}$ and hence $T$ is optimal. Moreover, $T$ clearly belongs to $\mathcal{F}^+$, which concludes the inductive step in this case.
			
			Now suppose that $e$ is not incident to $V_{\ge 3}$; this implies that $e$ belongs to a path $P_i$. Let $T^*$ be an optimal CDT for $H$ whose root is associated to $S(H, e)$, whose existence is guaranteed by Lemma \mainLemmaSparse. Again, for each $R \in \mathcal{R}(S(H, e), H)$ let $T^R \in \mathcal{F}^+$ be an optimal CDT for $R$. Now we construct the CDT $T$ from $T^*$ by replacing the subtrees of $\ro{T^*}$ by the trees $\{T^R\}$. As in the previous case, the optimality of the trees $\{T^R\}$ implies that $T$ is optimal and also we have that $T \in \mathcal{F}^+$. This concludes the inductive step and the proof of the lemma.
		\end{proof}

		\begin{proof}[Proof of Lemma 4]
			Let $T \in \mathcal{F}^+$ be an optimal CDT for $G$ with the minimum number of nodes. We claim that $T$ has height at most $9d + 1$.
			
			By means of contradiction suppose not and consider a path $Q$ from $\ro{T}$ to one of its leaves which has more than $9d$ internal nodes. Since there are at most $6d$ edges incident to $V_{\ge 3}$ and at most $3d$ paths $P_i$'s in $G$, this means that either: (i) two nodes in $Q$ query the same edge incident to $V_{\ge 3}$ or (ii) two nodes $x, x'$ in $Q$ are associated to two strategies $S(G^x, e)$ and $S(G^{x'}, e')$, where both $e$ and $e'$ belong to the same path $P_i$. Case (i) is forbidden by the definition of a CDT, so we consider case (ii). 
			
			Without loss of generality assume that $x$ is closer to the root of $T$ than $x'$. Notice that by the definition of $S(G^{x'}, e')$ we have that $e' \in E(G^{x'}) \cap E(P_i)$. Now we use the fact that $S(G^x,e)$ removes all edges in $P_i$, that is, for every $R \in \mathcal{R}(S(G^x,e), G^x)$ we have $E(R) \cap E(P_i) = \emptyset$. But the fact that $x'$ is a descendant of $x$ implies that $G^{x'}$ is a subgraph of a residual graph in $\mathcal{R}(S(G^x,e), G^x)$ and hence $E(G^{x'}) \cap E(P_i)$. This contradicts a previous observation that $e' \in E(G^{x'}) \cap E(P_i)$, which implies that $T$ has height at most $9d + 1$ and concludes the proof of the lemma.  
		\end{proof}
		

	\subsection{Proof of Lemma 5}
	
			Consider a tree $T \in \mathcal{F}$; we claim that each node in $T$ has at most 4 children. Equivalently, if $x$ is a node in $T$ we want to upper bound $|\mathcal{R}(S^x, G^x)|$. If $x$ is associated to a strategy $S(G^x, e)$ then as noted previously we have that $|\mathcal{R}(S^x, G^x)| \le 4$. Now if $x$ is associated to a strategy that only queries one edge $e$ incident to $V_{\ge 3}$ then, as in standard decision trees, $\mathcal{R}(S^x, G^x) = \{G^x \setminus N(e), G^x \setminus e\}$. It follows that the outdegree of each tree in $\mathcal{F}$ is at most 4.
		
		Since a tree in $\mathcal{F}$ has height at most $9 d + 1$, the previous degree bound imply that each such tree has at most $4^{9 d + 1}$ nodes. Now notice that each node has one of $e(G)$ possible strategies, because the strategies $S^x$'s allowed in $\mathcal{F}$ are uniquely determined by the choice of an edge. These observations imply that there are at most $e(G)^{4^{9d + 1}}$ trees in $\mathcal{F}$.


	\subsection{Proof of Lemma 6}
	
	Let $P_i = (u_1 e_1 u_2 e_2 \ldots e_q u_{q+1})$ and let $S_a^b$ denote the strategy which queries edges $\{e_k\}_{k = a}^b \cap H$ in order of the indices from $a$ to $b$. That is, if $a < b$ then $S_a^b$ queries the edges $e_a, e_{a + 1}, \ldots, e_b$ and if $a > b$ then it queries edges $e_a, e_{a - 1}, \ldots, e_b$, always ignoring edges outside $H$. To simplify the notation define the random matching $M_a^b \doteq M(S_a^b, H)$ and without ambiguity we use $M_a^b$ to also denote the random variable $|M_a^b|$ corresponding to its cardinality. 
	
	First we prove that we can compute in polynomial time the following quantities corresponding to $S_a^1$, for all $a \ge 1$: $\E{M_a^1|e_1 \in M_a^1}$, $\E{M_a^1|e_1 \notin M_a^1}$ and $Pr(e_1 \in M_a^1)$; then we show how to use this information to compute the values required by the lemma. 
	
	We proceed in a dynamic programming fashion. First, it is trivial to compute the quantities associated to $S_1^1$ and now we want to compute the quantities associated $S_a^1$ assuming that those for the $S_{a'}^1$'s with $a' < a$ have already been computed. It is not difficult to see that $$\E{M_a^1 \Big|e_1 \in M_a^1} = p_{e_a} \left(1 + \E{M_{a - 2}^1 \Big| e_1 \in M_{a - 2}^1}\right) + \left(1 - p_{e_a}\right) \E{M_{a - 1}^1 \Big| e_1 \in M_{a - 1}^1}.$$ Moreover, the analogous expression with complementary conditionings also holds: $$\E{M_a^1 \Big|e_1 \notin M_a^1} = p_{e_a} \left(1 + \E{M_{a - 2}^1 \Big| e_1 \notin M_{a - 2}^1}\right) + \left(1 - p_{e_a}\right) \E{M_{a - 1}^1 \Big| e_1 \notin M_{a - 1}^1}.$$ Finally, we also have $Pr(e_1 \in M_a^1) = p_{e_1} Pr(e_1 \in M_{a - 2}^1) + (1 - p_{e_1}) Pr(e_1 \in M_{a - 1}^1)$. All these expressions can be easily computed using the information about $S_{a-1}^1$ and $S_{a-2}^1$ available by the dynamic programming hypothesis, concluding the proof of our claim. 
	
	By suitably relabeling the edges $e_k$, the above result implies that we can also compute $\E{M_a^q|e_q \in M_a^q}$, $\E{M_a^q|e_q \notin M_a^q}$ and $Pr(e_q \in M_a^q)$ for $a \ge 1$. A final remark is that using the law of total expectation we can also compute $\E{M_a^1}$ and $\E{M_a^q}$.
	
	Now let $a$ be such that $e = e_a$. We show how to compute the desired quantities by the lemma: $\Ep{M}$, $Pr(u_1 \in M)$, $Pr(u_{q + 1} \in M)$ and $Pr(u_1 \in M \wedge u_{q + 1} \in M)$. It is useful to think of $S(H, e_a)$ roughly as the strategy which queries $e_a$ first then queries according to $S_{a'}^1$ and the according to $S_{a''}^q$, for suitable $a'$ and $a''$.	To make this more formal we need to split into a few cases.
	
	\paragraph{Case 1: $1 < a < q$.} Notice that since $e_a \in H$ and $e_a$ is the first edge queried by $S(H, e)$, the event $e_a \in M$ is the same as the event that the edge $e_a$ exists (both which happen with probability $p_{e_a}$). In addition, since by hypothesis $P_i$ is a path with more than one node, we have $u_1 \neq u_{q + 1}$ and thus no edge in the set $\{e_k\}_{k = 1}^{a-1}$ intersects an edge in the set $\{e_k\}_{k = a+1}^{q}$. This guarantees that the outcomes of, say, strategies $S_{a-1}^1$ and $S_{a + 1}^q$ are independent. 
	
	These observations give the following equations:
	\begin{gather*}
		\E{M \Big | e_a \in M} = 1 + \E{M_{a - 2}^1} + \E{M_{a + 2}^q}\\
		\E{M \Big | e_a \notin M} = \E{M_{a - 1}^1} + \E{M_{a + 1}^q}.\\
	\end{gather*}
	
	Also using the fact that $1 < a < q$, and hence $e_a$ is not incident to either $u_1$ or $u_{q+1}$, we obtain that: 
	\begin{gather*}
		Pr(u_1 \in M | e_a \in M) = Pr(u_1 \in M_{a-2}^1) = Pr(e_1 \in M_{a - 2}^1) \\
		Pr(u_{q+1} \in M | e_a \in M) = Pr(u_{q+1} \in M_{a+2}^q) = Pr(e_q \in M_{a+2}^q)\\
		Pr(u_1 \in M | e_a \notin M) = Pr(u_1 \in M_{a-1}^1) = Pr(e_1 \in M_{a - 1}^1)\\
		Pr(u_{q+1} \in M | e_a \notin M) = Pr(u_{q+1} \in M_{a + 1}^q) = Pr(e_q \in M_{a + 1}^q).
	\end{gather*}
	
	Using the additional independence remark made previously, we also have that 
	\begin{gather*}
		Pr(u_1 \in M \wedge u_{q + 1} \in M | e_a \in M) = Pr(e_1 \in M_{a-2}^1) Pr(e_q \in M_{a+2}^q) \\
		Pr(u_1 \in M \wedge u_{q + 1} \in M | e_a \notin M) = Pr(e_1 \in M_{a-1}^1) Pr(e_q \in M_{a+1}^q).
	\end{gather*}		
	
	Therefore, using the fact that $Pr(e_a \in M) = p_{e_a}$ and the laws of total probability and total expectations, we can compute $\Ep{M}, Pr(u_1 \in M), Pr(u_{q+1} \in M)$ and $Pr(u_1 \in M \wedge u_{q + 1} \in M)$ in polynomial time using the information about the $S_a^b$'s. 
	
	\paragraph{Case 2: $a = 1$ or $a = q$.} The equations for the expectations are the same as in Subcase 1.1. Now if $a = 1 \neq q$ then 
	\begin{gather*}
		Pr(u_1 \in M | e_a \in M) = 1 \\
		Pr(u_{q + 1} \in M | e_a \in M) = Pr(u_1 \in M \wedge u_{q + 1} \in M | e_a \in M) = Pr(e_q \in M_{a + 2}^q)\\
		Pr(u_1 \in M |e_a \notin M) = Pr(u_1 \in M \wedge u_{q + 1} \in M | e_a \notin M) = 0 \\
		Pr(u_{q + 1} \in M | e_a \notin M) = Pr(e_q \in M_{a + 1}^q).
	\end{gather*}
		
	Applying a similar reasoning to the case $a = q \neq 1$ we obtain
	\begin{gather*}
		Pr(u_1 \in M | e_a \in M) = Pr(u_1 \in M \wedge u_{q + 1} \in M | e_a \in M) = Pr(e_1 \in M_{a - 2}^1) \\
		Pr(u_{q + 1} \in M | e_a \in M) = 1\\
		Pr(u_1 \in M |e_a \notin M) = Pr(e_1 \in M_{a - 1}^1)\\
		Pr(u_{q + 1} \in M | e_a \notin M) = Pr(u_1 \in M \wedge u_{q + 1} \in M | e_a \notin M) = 0.
	\end{gather*}
	
	Finally, if $a = 1 = q$ then $$Pr(u_1 \in M) = Pr(u_{q + 1} \in M) = Pr(u_1 \in M \wedge u_{q + 1} \in M) = p_{e_a}.$$ Therefore, we can again compute $\Ep{M}, Pr(u_1 \in M), Pr(u_{q+1} \in M)$ and $Pr(u_1 \in M \wedge u_{q + 1} \in M)$ in polynomial time using the information about the $S_a^b$'s. 
	
	\bigskip Since we can calculate the probabilities associated to the $S_a^b$'s in polynomial time and then according to Cases 1 and 2 use this information to calculate directly the values $\Ep{M}$, $Pr(u_1 \in M)$, $Pr(u_{q + 1} \in M)$ and $Pr(u_1 \in M \wedge u_{q + 1} \in M)$, this concludes the proof of the lemma.


	\subsection{Proof of Lemma 7}

		First consider $G \setminus e$ and let $G_1$ and $G_2$ be its connected components (if $G \setminus e$ has only one component then we set $G_2$ to be the empty graph). Since $v(G) = v(G_1) + v(G_2)$ and $e(G) = 1 + e(G_1) + e(G_2)$, we have that $e(G) - v(G) = 1 + (e(G_1) - v(G_1)) + (e(G_2) - v(G_2))$. Since each $G_i$ is connected, $e(G_i) \ge v(G_i) - 1$ for all $i$ and thus $d \ge e(G) - v(G) = e(G_j) - v(G_j)$ for all $j$. This shows that all components of $G \setminus e$ are $d$-sparse. 
		
		Now consider $G \setminus N(e)$ and let $G_1, \ldots, G_k$ be its connected components. Since $G$ is connected it must contain a distinct edge connecting $e$ to each $G_i$. Thus, $e(G) = 1 + \sum_{i = 1}^k (e(G_i) + 1)$. Using the fact that $v(G) = 2 + \sum_{i = 1}^k v(G_i)$, we obtain $$d \ge e(G) - v(G) = -1 + \sum_{i = 1}^k (e(G_i) - v(G_i) + 1).$$ Again using the fact that each $G_i$ is connected, we get $e(G_i) - v(G_i) \ge -1$ for all $i$ and hence $d \ge -1 + (e(G_j) - v(G_j) + 1)$ for all $j$, which shows that all components of $G \setminus N(e)$ are $d$-sparse.
		

\section{Proofs for Section 4}

	\subsection{Proof of Lemma 8}
	
			Let $U = \{u_1, \ldots, u_k\}$ and $V$ be the vertex classes of $G$ and for any node $u\in G$ let $E(u)$ denote the set of edges incident to $u$. Consider the strategy $S$ that queries all edges in $E(u_k)$ (in an arbitrary order), then queries all edges in $E(u_{k - 1})$ and so on. 
			
			We want to upper bound the probability that a node $u_i$ is unmatched in $\Mf{S}{G}$. In order to achieve this, consider the execution of $S$ right before it starts querying edges in $E(u_i)$ and let $M$ denote the random matching of $G$ obtained by $S$ at this point. Suppose $u_i$ is unmatched in $\Mf{S}{G}$, which implies that no edge $(u_i,v)$ could be added to the matching. If $(u_i, v)$ could not be added to the matching then either $(u_i,v)$ does not belong to the realization of $G$ or $v$ is already matched in $M$. Thus, conditioning on the fact that all nodes in $V' \subseteq V(G)$ are unmatched in $M$ we get that 
			\begin{equation}
				Pr\left(u_i \notin \Mf{S}{G} \Big| \bigwedge_{v' \in V'} v' \notin M \right) \le Pr\left(\bigwedge_{v' \in V'} (u_i, v) \notin G \Big | \bigwedge_{v' \in V'} v' \notin M\right). \label{bipartiteEq}
			\end{equation} 
			
			Since $M$ does not depend on the existence of any edge $(u_i,v)$, we can use the independence of the edges in $G$ to obtain $$Pr\left(\bigwedge_{v' \in V'} (u_i, v) \notin G \Big | \bigwedge_{v' \in V'} v' \notin M\right) = Pr\left(\bigwedge_{v' \in V'} (u_i, v) \notin G \right) = \prod_{v' \in V'} Pr((u_i, v) \notin G) \le q^{|V'|}$$ for all $V' \subseteq V(G)$ such that $Pr(\bigwedge_{v' \in V'} v' \notin M) > 0$. Furthermore, notice that in every scenario $M$ leaves at least $i$ nodes from $V$ unmatched, since it can match at most $k - i$ nodes of $U$. Then inequality \eqref{bipartiteEq} reduces to the desired bound $Pr(u_i \notin \Mf{S}{G}) \le q^i$. 
			
			Therefore the probability that $u_i$ is matched in $\Mf{S}{G}$ is at least $1 - q^i$. Since $G$ is bipartite, $\EMf{S}{G}$ is equal to the expected number of nodes of $U$ which are matched, so by linearity of expectation $\EMf{S}{G} \ge \sum_{i = 1}^k (1 - q^i) = k - \sum_{i = 1}^k q^i$. But then for any $\epsilon > 0$ we can bound the last summation as follows:
			\begin{eqnarray*}
				\sum_{i = 1}^k q^i \le \sum_{i = 1}^{\lfloor \epsilon k \rfloor} q^0 + \sum_{i = \lfloor \epsilon k \rfloor}^k q^{\lfloor \epsilon k \rfloor} \le \epsilon k + k q^{\lfloor \epsilon k \rfloor},
			\end{eqnarray*}
		which gives the desired bound of $\EMf{S}{G} \ge (1 - q^{\lfloor \epsilon k \rfloor} - \epsilon) k$.	


\section{Proofs for Section 5}

	\subsection{Proof of Lemma \tightSampling}
			
		Consider the graph $G$ depictured in Figure \ref{fig:tightSampling}. It consists of $n/4$ disjoint paths and every edge has probability 1, except edge $e_1^2$ which has probability $1/2$. Consider the following adaptive strategy $S$: it first queries $e_1^2$ and, in case it belongs to the realization of $G$, $S$ queries edges $\{e^2_i\}$ sequentially in an arbitrary order; in case $e$ does not belong to the realization, $S$ queries edges  $\{e^1_i\}$ and $\{e^3_i\}$ sequentially in an arbitrary order. Notice that in the first case $S$ obtains a matching of size $n/4$, whereas in the second it obtains a matching of size $n/2$. 

\begin{figure}[htbp]
	\centering
		\includegraphics[scale=0.75]{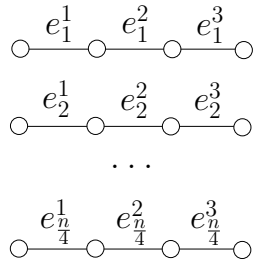}
	\caption{Graph for proof of Lemma \tightSampling.}
	\label{fig:tightSampling}
\end{figure}
			
		Recall that $\sigma_i \sim G$ and that $\eta = (1/k) \sum_{i = 1}^k |\Mf{S}{G}(\sigma_i)|$. From the previous paragraph we know that $|\Mf{S}{G}(\sigma_i)| = n/2 + (n/2) B(1,1/2)$, where $B(a,b)$ denotes a binomial random variable with $a$ trials and success probability $b$. Therefore, $\eta = n/2 + (n/2k) B(k, 1/2)$. Moreover, since $\EMf{S}{G} = \E{\eta} = 3n/4$, we get that
		\begin{equation*}
			Pr\left(\eta \ge \EMf{S}{G} + t\right) = Pr\left(\frac{n}{2} + \frac{n}{2k} B\left(k, 1/2\right) \ge \frac{3n}{4} + t\right) = Pr\left(B\left(k, 1/2\right) \ge \frac{k}{2} + \frac{2k t}{n}\right).
		\end{equation*}
		However, for every $t' \in [0, k/8]$ we can lower bound $Pr(B\left(k, 1/2\right) \ge \frac{k}{2} + t')$ by $\frac{1}{15} \exp(- 16 t^{\prime 2}/k)$ \cite{matousekVondrak}. By our hypothesis on $t$ we have $2k t / n \le k/8$ and hence we can employ this bound on the last displayed inequality, obtaining:
		\begin{equation*}
			Pr\left(\eta \ge \EMf{S}{G} + t\right) \ge \frac{1}{15} \exp\left(- \frac{64 k t^2}{n^2}\right).
		\end{equation*}
		
		Since $\eta$ is symmetric around its expected value $\EMf{S}{G}$, the same upper bound holds for its lower tail:
		\begin{equation*}
			Pr\left(\eta \le \EMf{S}{G} - t\right) \ge \frac{1}{15} \exp\left(- \frac{64 k t^2}{n^2}\right).
		\end{equation*}
		The lemma then follows by combining the bounds on upper and lower the tails.


	\subsection{Proof of Lemma 12}

			Fix $j$ and define the function $g_j(x_1, \ldots, x_{j - 1}, x_{j + 1}, \ldots, x_m)$ as the size of the maximum matching in the subgraph of $G$ which contains edge $e_i$ iff $x_i = 1$; we remark that this subgraph does not contain the edge $e_i$. It is easy to see that for every $x_1, \ldots, x_m \in \{0,1\}$ $$0 \le \mu'(x_1, \ldots, x_m) - g_j(x_1, \ldots, x_{j - 1}, x_{j + 1}, \ldots, x_m) \le 1.$$ 
			
			Now let $M(x_1, \ldots, x_m)$ be a maximum matching in the subgraph of $G$ induced by $x_1, \ldots, x_m$ and recall that $|M(x_1, \ldots, x_m)| = \mu'(x_1, \ldots, x_m)$. Notice that if $\mu'(x_1, \ldots, x_m) > g_j(x_1, \ldots, x_{j - 1}, x_{j + 1}, \ldots, x_m)$ then it must be the case that $e_j$ belongs to $M(x_1, \ldots, x_m)$. Then we can charge the difference between $\mu'$ and the $g_j$'s to the edges in $M(x_1, \ldots, x_m)$:
			\begin{eqnarray*}
				\sum_{j=1}^m \left[\mu'(x_1, \ldots, x_m) - g_j(x_1, \ldots, x_{j - 1}, x_{j + 1}, \ldots, x_m) \right] \le |M(x_1, \ldots, x_m)| = \mu'(x_1, \ldots, x_m)
			\end{eqnarray*}
	which shows that $\mu'(.)$ is self-bounding.


	\section{Concentration inequalities}
	
		For completeness we present two inequalities used to bound large deviations of sums of random variables \cite{BoucheronLB03}.
	
		\begin{lemma}[Hoeffding's inequality] \label{hoeffding}
		Let $X_1, \ldots, X_k$ be independent random variables with $X_i \in [a_i, b_i]$. Let $Y = (1/k) \sum_{i = 1}^k X_i$. Then
		 \begin{equation*}
		 	Pr\left(\left|Y - \E{Y}\right| \ge t\right) \le 2 \exp\left( -\frac{2k^2 t^2}{\sum_{i = 1}^k (b_i - a_i)^2} \right).
		 \end{equation*}
	\end{lemma}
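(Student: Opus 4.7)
The plan is to follow the standard Chernoff--Cram\'er approach: reduce to a centered sum, exponentiate and apply Markov, factor the MGF using independence, bound each factor via Hoeffding's lemma, and optimize over the free parameter.

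First, I would center the variables by defining $Z_i = X_i - \E{X_i}$, so that $Z_i \in [a_i - \E{X_i},\, b_i - \E{X_i}]$, an interval of width $b_i - a_i$, and $Y - \E{Y} = (1/k) \sum_i Z_i$. It suffices to bound the upper tail $Pr(Y - \E{Y} \ge t)$; the lower tail is handled identically by replacing $X_i$ with $-X_i$, and the factor of $2$ in the stated inequality comes from the union bound over both tails.

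Second, for any $\lambda > 0$ I would apply Markov's inequality to the nonnegative random variable $\exp(\lambda \sum_i Z_i)$:
\begin{equation*}
	Pr\left(\sum_{i=1}^k Z_i \ge k t\right) \le e^{-\lambda k t}\, \E{\exp\Bigl(\lambda \sum_{i=1}^k Z_i \Bigr)} = e^{-\lambda k t} \prod_{i=1}^k \E{e^{\lambda Z_i}},
\end{equation*}
where the factorization uses the independence of the $Z_i$'s. The next step is to bound each factor $\E{e^{\lambda Z_i}}$ using \emph{Hoeffding's lemma}: if $Z$ is a mean-zero random variable supported in an interval $[\alpha,\beta]$, then
\begin{equation*}
	\E{e^{\lambda Z}} \le \exp\!\left(\frac{\lambda^2 (\beta - \alpha)^2}{8}\right).
\end{equation*}
I would prove this lemma by using convexity of $x \mapsto e^{\lambda x}$ to dominate it on $[\alpha,\beta]$ by the chord through the endpoints, taking expectations, and then analyzing $\psi(\lambda) = \log\!\bigl(\tfrac{\beta}{\beta - \alpha} e^{\lambda \alpha} - \tfrac{\alpha}{\beta - \alpha} e^{\lambda \beta}\bigr)$ via a Taylor expansion; a short calculus argument shows $\psi(0) = \psi'(0) = 0$ and $\psi''(\mu) \le (\beta - \alpha)^2/4$ for all $\mu$, which yields the stated bound on $\E{e^{\lambda Z}}$.

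Third, substituting Hoeffding's lemma into the factored MGF bound gives
\begin{equation*}
	Pr\left(Y - \E{Y} \ge t\right) \le \exp\!\left(-\lambda k t + \frac{\lambda^2}{8} \sum_{i=1}^k (b_i - a_i)^2\right).
\end{equation*}
Optimizing the right-hand side over $\lambda > 0$ by choosing $\lambda = 4 k t / \sum_i (b_i - a_i)^2$ collapses the expression to $\exp\!\bigl(-2 k^2 t^2 / \sum_i (b_i - a_i)^2\bigr)$. Applying the same argument to $-X_i$ to control the lower tail and taking a union bound produces the final factor of $2$ and completes the proof.

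The only nontrivial step is Hoeffding's lemma itself; the chord-based MGF bound and the derivative computations for $\psi$ are the main (but still routine) obstacle. The remaining steps --- centering, the Chernoff exponentiation trick, independence, and optimizing over $\lambda$ --- are entirely mechanical.
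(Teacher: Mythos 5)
Your proof is correct: the centering, the Chernoff exponentiation with Markov's inequality, the factorization by independence, Hoeffding's lemma with the $\exp(\lambda^2(b_i-a_i)^2/8)$ bound, and the optimal choice $\lambda = 4kt/\sum_i (b_i-a_i)^2$ all check out and yield exactly the stated exponent $-2k^2t^2/\sum_i(b_i-a_i)^2$, with the factor of $2$ from the two tails. Note that the paper does not prove this lemma at all --- it is stated ``for completeness'' with a citation to the literature --- so there is nothing to compare against; your argument is the standard and complete derivation of the cited result.
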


	\begin{lemma}[Bernstein's inequality] \label{bernIneq}
		Let $X_1, \ldots, X_k$ be independent random variables with equal variance $\V{X_i} = \sigma^2$. Let $Y = (1/k) \sum_{i = 1}^k X_i$. Then
		 \begin{equation*}
		 	Pr\left(\left|Y - \E{Y}\right| \ge t\right) \le 2 \exp\left( -\frac{k t^2}{2(\sigma^2 + t/3)} \right).
		 \end{equation*}
	\end{lemma}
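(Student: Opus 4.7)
The plan is to apply the standard Chernoff--Cram\'er exponential moment technique. The statement as given requires an implicit boundedness assumption on the $X_i$'s (otherwise the bound fails in general); the natural version, matching the usage in the paper where $X_i = \mu(\sigma_i)$ and the self-bounding argument has already been invoked, is $|X_i - \E{X_i}| \le 1$ after an appropriate rescaling. Center the variables by setting $Z_i = X_i - \E{X_i}$, so that $\E{Z_i} = 0$ and $\E{Z_i^2} = \sigma^2$. For $\lambda > 0$, Markov's inequality applied to $e^{\lambda \sum_i Z_i}$, combined with independence, yields
\begin{equation*}
  Pr(Y - \E{Y} \ge t) \;=\; Pr\left(\sum_{i=1}^k Z_i \ge kt\right) \;\le\; e^{-\lambda k t} \prod_{i=1}^k \E{e^{\lambda Z_i}}.
\end{equation*}

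The crux is bounding each moment generating function in terms of $\sigma^2$ and $\lambda$. I would Taylor-expand the exponential, use $\E{Z_i} = 0$ to kill the linear term, and then use $|Z_i|^j \le Z_i^2$ for $j \ge 2$ (by boundedness) to bound $\E{|Z_i|^j} \le \sigma^2$. The tail $\sum_{j \ge 2} \lambda^j/j!$ is then controlled geometrically using $j! \ge 2 \cdot 3^{j-2}$, yielding
\begin{equation*}
  \E{e^{\lambda Z_i}} \;\le\; 1 + \frac{\sigma^2 \lambda^2}{2(1 - \lambda/3)} \;\le\; \exp\left( \frac{\sigma^2 \lambda^2}{2(1 - \lambda/3)} \right)
\end{equation*}
for $\lambda \in (0, 3)$. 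This is the classical Bennett/Bernstein MGF estimate and is the main step of the proof.

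Substituting back and using independence gives an exponent of $-\lambda k t + k \sigma^2 \lambda^2 / (2(1 - \lambda/3))$. A straightforward calculus minimization at $\lambda^{\ast} = t/(\sigma^2 + t/3)$ (which lies in the required range since $1 - \lambda^{\ast}/3 = 3\sigma^2/(3\sigma^2 + t) > 0$) collapses this to $-kt^2 / (2(\sigma^2 + t/3))$, matching the target exponent. The lower tail $Pr(Y - \E{Y} \le -t)$ follows by rerunning the same argument with $-Z_i$ in place of $Z_i$ (noting that variance and boundedness are preserved), and a union bound produces the factor of $2$. The only technically delicate step is the MGF estimate; everything else is mechanical. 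Since this is a textbook inequality explicitly cited from \cite{BoucheronLB03} and presented in the paper only ``for completeness'', the task is one of clean exposition rather than genuine mathematical obstruction.
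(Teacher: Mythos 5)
Your argument is correct: the Chernoff--Cram\'er step, the Bennett-style MGF bound $\E{e^{\lambda Z_i}} \le \exp\bigl(\sigma^2\lambda^2/(2(1-\lambda/3))\bigr)$ obtained via $j! \ge 2\cdot 3^{j-2}$, and the optimization at $\lambda^\ast = t/(\sigma^2+t/3)$ (which indeed satisfies $\lambda^\ast < 3$) all check out and collapse to the stated exponent. There is, however, nothing in the paper to compare this against: the lemma appears in the appendix ``for completeness'' with a citation to \cite{BoucheronLB03} and no proof is given, so yours is simply the standard textbook derivation supplied where the authors chose to cite.

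Two remarks. First, you are right that the lemma as printed is false without a boundedness hypothesis (a single heavy-tailed summand with small variance violates it), and that the clean hypothesis matching the displayed constant is $|X_i - \E{X_i}| \le 1$; note that rescaling does not rescue the general case, since replacing $X_i$ by $X_i/b$ turns the denominator into $2(\sigma^2 + bt/3)$. Second, this caveat actually bites in the paper's own application: in Section 5.2 the summands are $\mu(\sigma_i) \in [0, n/2]$, whose deviations are certainly not bounded by $1$, so the inequality being invoked there is really the Bernstein-type bound for self-bounding functions from \cite{BoucheronLB03} (where the relevant ``increment at most $1$'' condition is coordinate-wise, per edge indicator, rather than on $X_i - \E{X_i}$ itself). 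That is a gap in how the lemma is stated and used by the paper, not in your proof of the bounded version.
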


\end{document}